\documentclass[journal]{IEEEtran}
\RequirePackage{pdf14} 
\usepackage{cite}
\usepackage{graphicx}
\usepackage{subcaption}
\usepackage[cmex10]{amsmath}
\interdisplaylinepenalty=2500
\usepackage{xcolor}
\usepackage{array}
\usepackage{url}
\usepackage{bbm}
\usepackage{todonotes}

\usepackage{algorithm}
\usepackage{algorithmic}
\usepackage{bm}
\renewcommand{\algorithmicrequire}{\textbf{Input:}}

\usepackage{amsmath,amssymb}

\usepackage{amsmath} 
\usepackage{bm}      
\usepackage{epsfig}
\usepackage{amsfonts,amssymb,multirow,bigstrut,booktabs,ctable,latexsym}
\usepackage{mathtools}
\usepackage[mathscr]{eucal}
\usepackage{verbatim}

\usepackage{amsthm}
\usepackage{algorithm}
\usepackage[normalem]{ulem}

\newcommand{\argmax}{\arg\!\max}

\newcommand{\norm}[1]{\left\lVert#1\right\rVert}

\usepackage{amsmath}
\newtheorem{thm}{Theorem}[section]

\newtheorem{lm}[thm]{Lemma}
\newtheorem{cor}[thm]{Corollary}
\newtheorem{df}[thm]{Definition}

\newtheorem{rem}[thm]{Remark}
\newtheorem{prop}[thm]{Proposition}

\newtheorem{ass}[thm]{Assumption}

\usepackage{amssymb}
\usepackage[switch]{lineno}
\usepackage{amsthm}

\hyphenation{op-tical net-works semi-conduc-tor}
\IEEEoverridecommandlockouts
\pdfminorversion=4

\begin{document}
	\title{\bf Shaping Advice in Deep Reinforcement Learning}
	\author{Baicen Xiao$^1$, \emph{Member, IEEE}, Bhaskar Ramasubramanian$^{1,2}$, \emph{Member, IEEE}, Radha Poovendran$^1$, \emph{Fellow, IEEE}
{		\thanks{$^{1}$Network Security Lab, Department of Electrical and Computer Engineering, 
			University of Washington, Seattle, WA 98195, USA. \newline
			{\tt\small \{bcxiao, rp3\}@uw.edu}}
		\thanks{$^{2}$Electrical and Computer Engineering, Department of Engineering and Design, 
			Western Washington University, Bellingham, WA 98225, USA. \newline
			{\tt\small \{ramasub\}@wwu.edu}}
		\thanks{This work was supported by the Office of Naval Research via Grant N00014-17-S-B001.}
	}}
\maketitle
	
\begin{abstract}
 Reinforcement learning involves agents interacting with an environment to complete tasks. 
When rewards provided by the environment are sparse, agents may not receive immediate feedback on the quality of actions that they take, thereby affecting learning of policies. 
In this paper, we propose methods to augment the reward signal from the environment with an additional reward termed \emph{shaping advice} in both single- and multi-agent reinforcement learning. 
The shaping advice is specified as a difference of potential functions at consecutive time-steps. 
Each potential function is a function of observations and actions of the agents. 
The use of potential functions is underpinned by an insight that the total potential when starting from any state and returning to the same state is always equal to zero.  
We show through theoretical analyses and experimental validation that the shaping advice does not distract agents from completing tasks specified by the environment reward. 
Theoretically, we prove that the convergence of policy gradients and value functions when using shaping advice implies the convergence of these quantities 
in the absence of shaping advice. 
We design two algorithms- \emph{Shaping Advice in Single-agent reinforcement learning (SAS)} and \emph{Shaping Advice in Multi-agent reinforcement learning (SAM)}. 
Shaping advice in \emph{SAS} and \emph{SAM} needs to be specified only once at the start of training, and can easily be provided by non-experts. 
Experimentally, we evaluate \emph{SAS} and \emph{SAM} on two tasks in single-agent environments and three tasks in multi-agent environments that have sparse rewards. We observe that using shaping advice results in agents learning policies to complete tasks faster, and obtain higher rewards than algorithms that do not use shaping advice. 
Code for our experiments is available at \textcolor{blue}{\url{https://github.com/baicenxiao/Shaping-Advice}}. 
\end{abstract}
\begin{IEEEkeywords}
	shaping advice, potential-based, reinforcement learning, centralized training with decentralized execution
\end{IEEEkeywords}
	
\section{Introduction}
 
Reinforcement learning (RL) is a framework that allows agents to complete tasks in an environment, even when a model of the environment is not known \cite{sutton2018reinforcement}. 
An RL agent `learns' to complete tasks by maximizing an expected long-term reward, where the reward signal is provided by the environment. 
RL algorithms have been successfully implemented in many fields, including games \cite{mnih2015human, silver2016mastering, tampuu2017multiagent}, robotics \cite{lillicrap2016continuous}, autonomous vehicle coordination \cite{sallab2017deep}, analysis of social dilemmas \cite{leibo2017multi}, and resource allocation in cellular networks \cite{yin2021resource}. 

When the environment is not known, availability of immediate feedback on quality of actions taken 
at each time-step is critical to the learning of behaviors to successfully complete a task. 
This is termed \emph{credit assignment} \cite{sutton2018reinforcement}. 
When reward signals provided by the environment are sparse, it becomes difficult to perform effective credit assignment at intermediate steps of the learning process. 
One approach that has been shown to improve learning when rewards are sparse is \emph{reward shaping} \cite{agogino2008analyzing, devlin2011empirical, devlin2014potential}. 
Reward shaping techniques augment the reward provided by the environment with an additional \emph{shaping reward}. 
The shaping reward can be designed to be \emph{dense} (i.e., not sparse), and agents learn policies (which action to take in a particular state) using the augmented reward.

Any additional reward can distract an agent from completing a task specified by the reward provided by the environment, and therefore will need to be provided in a systematic manner \cite{randlov1998learning}. 
In this paper, we term the additional reward given to agents at each time-step as \emph{\textbf{shaping advice}}. 
The shaping advice is specified by a difference of potential functions at consecutive time-steps, where each potential function depends on observations and actions of the agents. 
Potential functions satisfy a critical property that the total potential when starting from a state and returning to the same state is zero \cite{ng1999policy}. 
This ensures that an agent will not be distracted from completing a task specified by the reward provided by the environment. 

There are additional challenges when there are multiple RL agents. 
In such a setting, each agent will have to interact not only with its environment, but also with other agents.
As behaviors of agents evolve in the environment, the environment will become non-stationary from the perspective of any single agent. 
Thus, agents that independently learn behaviors by assuming other agents to be part of the environment can result in unstable learning regimes \cite{foerster2017stabilising, matignon2012independent, tan1993multi}. 

When multiple trained agents are deployed independently, or when communication among agents is costly, the agents will need to be able to learn decentralized policies. 
Decentralized policies can be efficicently learned by adopting the \emph{centralized training with decentralized execution (CTDE)} paradigm, first introduced in \cite{lowe2017multi}. 
An agent using CTDE can make use of information about other agents' observations and actions to aid its own learning during training, but will have to take decisions independently at test-time. 
However, the ability of an agent to learn decentralized policies can be affected if reward signals from the environment are sparse. 


Reward shaping techniques that use potential functions satisfy an additional property that the identity of the optimal policies with and without the shaping reward is the same \cite{ng1999policy}. 
The state-of-the-art on the potential-based reward shaping in RL \cite{agogino2008analyzing, devlin2011empirical, devlin2014potential} is focused on environments with finite and discrete action spaces. 
To the best of our knowledge, adapting solution techniques proposed in the above papers to more general settings with continuous action spaces has not been explored. 
Moreover, in the multi-agent case, these works emphasize learning joint policies for agents. 
In comparison, we propose to learn decentralized policies for each agent, which will enable application of our method to environments with large numbers of agents. 

In this paper, we develop a comprehensive framework to enable effective credit assignment in RL environments with sparse rewards. 
These methods incorporate information about the task and environment to define \emph{shaping advice}. 
We term our algorithms \emph{Shaping Advice in Single-agent reinforcement learning (SAS)} when there is only a single agent, and \emph{Shaping Advice in Multi-agent reinforcement learning (SAM)}, when there are multiple agents. 
The shaping advice in SAS and SAM can be interpreted as \emph{domain knowledge} that aids credit assignment \cite{mannion2018reward}. 
This advice needs to be specified only once at the start of the training process. 
SAS and SAM can be applied in environments with continuous or discrete state and action spaces. 
We demonstrate that both algorithms do not distract agents from completing tasks specified by rewards from the environment. 
Specifically, our contributions are:
\begin{itemize}
\item We introduce SAS and SAM to incorporate potential-based shaping advice in single- and multi-agent deep RL environments with continuous action spaces. 
\item We demonstrate that shaping advice provided by SAS and SAM does not distract agents from completing tasks specified by the environment reward. We accomplish this by theoretically establishing that  
convergence of policy gradients and values when using the shaping advice implies convergence of these quantities in the absence of the shaping advice. 
\item We verify our theoretical results through extensive experimental evaluations. Specifically, 
\begin{itemize}
\item we evaluate SAS on two environments with sparse rewards- puddle-jump gridworld and continuous mountain car; 
\item we evaluate SAM on three tasks in the multi-agent Particle World environment \cite{lowe2017multi}. All these tasks have sparse rewards. We show that using shaping advice allows agents to learn policies to complete the tasks faster, and obtain higher rewards than: i) using sparse rewards alone, and ii) a state-of-the-art reward redistribution technique from \cite{gangwani2020learning}. 
\end{itemize}
\end{itemize}
Compared to a preliminary version that appeared in \cite{xiao2019potential}, in this paper, we develop a comprehensive framework for providing shaping advice in both, single- and multi-agent RL. 
We provide detailed theoretical analyses and experimental evaluations for each setting. 


The remainder of this paper is organized as follows: Section \ref{RelatedWork} presents related work and Section \ref{Background} provides an introduction to single- and multi-agent RL and potential-based shaping advice. 
Sections \ref{Methods1} - \ref{Experiments} present the main contributions of this paper. 
We provide details on SAS and SAM and present theoretical analysis of their convergence in Sections \ref{Methods1} and \ref{Methods2}. Experiments validating the use of SAS and SAM are reported in Section \ref{Experiments}, and Section \ref{Conclusion} concludes the paper.
\section{Related Work}\label{RelatedWork}

Techniques to improve credit assignment using feedback signals provided by a human operator have been studied in single-agent RL. 
Demonstrations provided by a human operator were used to synthesize a ‘baseline policy’ that was used to guide learning in \cite{taylor2011integrating, wang2017improving}. 
When expert demonstrations were available, imitation learning was used to guide exploration of the RL agent in \cite{kelly2019hg, ross2011reduction}. 
Feedback provided by a human operator was converted to a shaping reward to aid training a deep RL agent in environments with delayed rewards in \cite{xiao2020fresh}. 
These techniques, however, presume the availability of a human operator, which might limit their applicability. 

In the multi-agent setting, decentralized and distributed control techniques is a popular area of research. 
A widely studied problem in such systems is the development of algorithms to specify methods by which information can be exchanged among agents so that they can jointly complete tasks. 
A technique to ensure fixed-time consensus for multi-agent systems whose interactions were specified by a directed graph was studied in \cite{zuo2015nonsingular, tian2018fixed}. 
The authors of \cite{li2020consensus} proposed an adaptive distributed event-triggering protocol to guarantee consensus for multi-agent systems specified by linear dynamics and interactions specified by an undirected graph. 
We direct the reader to \cite{qin2016recent} for a survey of recent developments in consensus of multi-agent systems. 
These works, however, assumed the availability of models of individual agent's dynamics, and of the interactions between agents. 

Data-driven techniques are being increasingly adopted to solve problems in multi-agent systems. 
These methods do not require complete knowledge of the system model; rather, they use information about the state, input, and output of each agent to establish feasibility of solutions and guarantees on convergence of reinforcement learning-based algorithms. 
Input and state data were used in an online manner to design a distributed control algorithm to solve a cooperative optimal output regulation problem in leader-follower systems in \cite{gao2021reinforcement}. 
Information obtained from trajectories of each player were used in \cite{vamvoudakis2017game} to develop real-time solutions to multi-player games through the design of an actor-critic-based adaptive learning algorithm.  
The authors of \cite{qu2020scalable} identified a class of networked MARL problems where enforcing specific local interactions among players permitted the exploiting an \emph{exponential decay property} which enabled the development of scalable, distributed algorithms for optimization and control. 
A comprehensive overview of algorithms in cooperative and competitive MARL with a focus on theoretical analyses of their convergence and complexity was presented in \cite{zhang2021multi}. 

Cooperative MARL tasks are one instance of the setup described above, wherein all agents share the same global reward. 
The authors of \cite{sunehag2018value} introduced value decomposition networks that decomposed a centralized value into a  
sum of individual agent values to assess contributions of individual agents to a shared global reward. 
An additional assumption on monotonicity of the centralized value function 
was imposed in QMIX \cite{rashid2018qmix} to assign credit to an agent by enumerating a value for each valid action at a state. 
The action spaces of agents in the above-mentioned works were discrete and finite, and these techniques cannot be easily adapted to settings with continuous action spaces. 
In comparison, we study reward shaping in MARL tasks in environments with continuous action spaces.
 
 

An alternative approach to improve credit assignment is potential-based reward shaping. 
Although this requires prior knowledge of the problem domain, potential-based techniques have been shown to offer 
guarantees on optimality and convergence of policies in both single \cite{ng1999policy} and multi-agent \cite{devlin2011empirical, devlin2011theoretical, lu2011policy} cases. 
The aforementioned works had focused on the use of potential-based methods in environments with discrete action spaces. 
A preliminary version of this paper \cite{xiao2019potential} introduced potential-based techniques to learn stochastic policies in single-agent RL with continuous states and actions. 
In this paper, we adapt and extend methods from \cite{xiao2019potential} to develop a comprehensive framework for providing potential-based shaping advice in both single- and multi-agent RL. 

The authors of \cite{gangwani2020learning} developed a method called iterative relative credit refinement (IRCR), which used a `surrogate objective' to uniformly redistribute a sparse reward along the length of a trajectory in single and multi-agent RL. 
We empirically compare SAM with IRCR, and explain why SAM is able to guide agents to learn policies that result in higher average rewards than IRCR. 


Methods to learn a potential-based reward shaping function have been investigated 
in recent research. 
When domain knowledge was available, the authors of \cite{harutyunyan2015expressing} proposed a temporal-difference learning method to transform a given reward function to potential-based advice. 
In the absence of domain knowledge, graph convolutional networks were used in \cite{klissarov2020reward} to generate potential functions `automatically' by performing message passing between states at which the agent received a reward. 
Different from the above works, an approach to perform credit assignment by conditioning the value function on future events from a trajectory was examined by the authors of \cite{mesnard2020counterfactual}. 
We note that these works have focused on the single-agent case. 
An analogous approach for the multi-agent case, while interesting, is beyond the scope of the present paper, and remains a promising direction for future research. 
%
%
\section{Background}\label{Background}

This section 
presents some preliminary material on single- and multi-agent RL.
\subsection{Single-agent Reinforcement Learning}

An MDP \cite{puterman2014markov} is a tuple $(S,A,\mathbb{T},\rho_0, R)$. 
$S$ is the set of states, $A$ the set of actions, 
$\mathbb{T}:S \times A \times S \rightarrow [0,1]$ encodes $\mathbb{P}(s_{t+1}|s_t,a_t)$, the probability of transition to $s_{t+1}$, given current state $s_t$ and action $a_t$. 
$\rho_0$ is a probability distribution over the initial states. 
$R : S \times A \rightarrow \mathbb{R}$ denotes the reward that the agent receives when transitioning from $s_t$ while taking action $a_t$. 
In this paper, $R < \infty$. 

The goal for an RL agent \cite{sutton2018reinforcement} is to learn a \emph{policy} $\pi$, 
in order to maximize 
$J:=\mathbb{E}_{\tau \sim \pi}[\sum_{t=0}^{\infty}\gamma^t R(s_t,a_t)]$. 
Here, $\gamma$ is a discounting factor, and the expectation is taken over the trajectory $\tau=(s_0,a_0,r_0,s_1,\dots)$ induced by policy $\pi$. 
If $\pi: S \rightarrow A$, the policy is \emph{deterministic}. 
On the other hand, a randomized policy returns a probability distribution over the set of actions, and is denoted $\pi: S \times A \rightarrow [0,1]$. 

The value of a state-action pair $(s,a)$ following policy $\pi$ is represented by the \emph{Q-function}, written $Q^{\pi}(s,a) = \mathbb{E}_{\tau \sim \pi}[\sum_{t=0}^{\infty}\gamma^t R(s_t,a_t)|s_0=s,a_0=a]$. 
The Q-function allows us to calculate the state value $V^{\pi}(s) = \mathbb{E}_{a \sim \pi}[Q^{\pi}(s,a)]$. 
The advantage of a particular action $a$, over other actions at a state $s$ is defined by $A^{\pi}(s,a) := Q^{\pi}(s,a)-V^{\pi}(s)$.
%
%
%

\subsection{Stochastic Games for Multi-agent Reinforcement Learning}

A \emph{stochastic game} with $n$ players is a tuple $\mathcal{G} = (S, A^1, \dots, A^n, \mathbb{T}, R^1, \dots, R^n, O^1,\dots,O^n,\rho_0, \gamma)$. $S$ is the set of states, $A^i$ is the action set of player $i$, 
$\mathbb{T}: S \times A^1 \times \dots \times A^n \times S \rightarrow [0,1]$ encodes $\mathbb{P}(s_{t+1}|s_t, a^1_t, \dots, a^n_t)$, the probability of transition to state $s_{t+1}$ from $s_t$, given the respective player actions. 
$R^i: S \times A^1 \times \dots \times A^n \rightarrow \mathbb{R}$ is the reward obtained by agent $i$ when transiting from $s_t$ while each player takes action $a^i_t$. 
$O^i$ is the set of observations for agent $i$. 
At every state, each agent receives an observation correlated with the state: $o^i : S \rightarrow O^i$. 
$\rho_0$ is a distribution over the initial states, and $\gamma \in [0,1]$ is a discounting factor. 

A \emph{policy} for agent $i$ is a distribution over actions, defined by $\pi^i: O^i \times A^i \rightarrow [0,1]$. 
Let $\bm{\pi}:=\{\pi^1, \dots, \pi^n\}$ and $s:=(s^1,\dots,s^n)$. 
Following \cite{lowe2017multi}, in the simplest case, $s^i = o^i$ for each agent $i$, and we use this for the remainder of the paper. 
Additional information about states of agents can be included since we compute \emph{centralized} value functions. 
Let $V_i ^{\bm{\pi}} (s)=V_i (s, \pi^1, \dots, \pi^n):= \mathbb{E}_{\bm{\pi}} [\sum_t \gamma^t R^i_t | s_0 = s, \bm{\pi}]$ 
and $Q_i ^{\bm{\pi}} (s, a^1,\dots,a^n):=R^i + \gamma \mathbb{E}_{s'}[V_i ^{\bm{\pi}} (s')]$ where $V_i ^{\bm{\pi}} (s)= \mathbb{E}_{\{a^i \sim \pi^i\}_{i=1}^n}[Q_i ^{\bm{\pi}} (s, a^1,\dots,a^n)]$.  

\subsection{Reward Shaping in Reinforcement Learning}

Reward shaping methods augment the environment reward $R$ with an additional reward $F \in \mathbb{R}$, $F< \infty$. 
This changes 
the structure of the original MDP $M(=(S,A,\mathbb{T},\rho_0, R))$ to $M'=(S,A,\mathbb{T},\rho_0, R+F)$. 
The goal is to choose $F$ so that an optimal policy for $M'$, $\pi^{*}_{M'}$, is also optimal for the original MDP $M$. 
%
\emph{Potential-based reward shaping} (PBRS) schemes were shown to be able to preserve the optimality of deterministic policies in \cite{ng1999policy}.
PBRS was used in model-based RL in [20], in episodic RL in \cite{grzes2017reward}, and was extended to planning in partially observable domains in \cite{eck2016potential}.
These works focused on the finite-horizon case. 
In comparison, we consider the infinite horizon, discounted cost setting.

The additional reward $F$ in PBRS is defined as a difference of \emph{potentials}, $\phi(\cdot)$. 
Specifically, $F(s_t,a_t,s_{t+1}) := \gamma \phi(s_{t+1}) - \phi(s_t)$. 
Then, the Q-function, $Q^{*}_{M}(s,a)$, of the optimal greedy policy for $M$ and the optimal Q-function $Q^{*}_{M'}(s,a)$ for $M'$ are related by: 
$Q^{*}_{M'}(s,a)= Q^{*}_{M}(s,a) - \phi(s)$.
Therefore, the optimal greedy policy is not changed \cite{ng1999policy, devlin2012dynamic}, since:
\begin{align*}
	&\pi^{*}_{M'}(s) \in \argmax_{a\in A}~Q^{*}_{M'}(s,a) \\
	&\qquad = \argmax_{a\in A}~\big(Q^{*}_{M}(s,a) - \phi(s)\big) = \argmax_{a\in A}~Q^{*}_{M}(s,a).
\end{align*}

The authors of \cite{wiewiora2003principled} augmented $\phi(s)$ to include action $a$ as an argument and termed this \emph{potential-based advice} (PBA). 
They defined two forms of PBA, \emph{look-ahead PBA} and \emph{look-back PBA}, respectively defined by:
\begin{align}
	F(s_{t},a_{t},s_{t+1},a_{t+1}) &= \gamma \phi(s_{t+1},a_{t+1}) - \phi(s_{t},a_{t})\label{lookaheadPBA}\\
	F(s_{t},a_{t},s_{t-1},a_{t-1}) &=  \phi(s_{t},a_{t}) - {\gamma}^{-1}\phi(s_{t-1},a_{t-1}).\label{lookbackPBA}
\end{align}

For the look-ahead PBA scheme, the state-action value function for $M$ following policy $\pi$ is given by: 
\begin{align}\label{PBAq}
	Q^{\pi}_{M}(s,a) =  Q^{\pi}_{M'}(s,a)+\phi(s,a).
\end{align} 
The optimal greedy policy for $M$ can be recovered from the optimal state-action value function for $M'$ using the fact \cite{wiewiora2003principled}:
\begin{align}\label{PBAp}
	\pi^*_{M}(s_t) &\in \argmax_{a \in A} \big(Q^{*}_{M'}(s_t,a)+\phi(s_t,a)\big).
\end{align}
The optimal greedy policy for $M$ using look-back PBA can be recovered similarly.

%
%
%
%
%
%
In the multi-agent case, the \emph{shaping advice} for an agent at each time is a function of observations and actions of all agents. 
The shaping advice is augmented to the environment reward during training, and can take one of two forms, 
\emph{look-ahead} and \emph{look-back}, respectively given by: 
\begin{align}
	&F^i_t(s_t,a^i_t,a^{-i}_t,s_{t+1},a^i_{t+1},a^{-i}_{t+1})\label{LAPBA}\\ &\qquad \qquad :=\gamma \phi_i(s_{t+1},a^i_{t+1},a^{-i}_{t+1}) - \phi_i(s_t,a^i_t,a^{-i}_t)\nonumber 
\end{align}
\begin{align}
	&F^i_t(s_t,a^i_t,a^{-i}_t,s_{t-1},a^i_{t-1}, a^{-i}_{t-1})\label{LBPBA}\\&\qquad \qquad:=\phi_i(s_t,a^i_t,a^{-i}_t) - \gamma^{-1} \phi_i(s_{t-1},a^i_{t-1}, a^{-i}_{t-1})\nonumber
\end{align} 
We will denote by $\mathcal{G}'$ the $n$ player stochastic game that is identical to $\mathcal{G}$, but with rewards $R'^{i}:=R^i + F^i$ for each $i$. 

When the value of the potential function is identical for all actions in a particular state, we will term this \textbf{uniform advice}.  On the other hand, when the value of the potential function depends on the action taken in a state, we will term this \textbf{nonuniform advice}. 
We will explicitly distinguish between uniform and nonuniform variants of shaping advice in the single and multi-agent settings subsequently.

The shaping advice is a heuristic that uses knowledge of the environment and	 task, along with information available to the agent \cite{gupta2017cooperative}. 
For example, in the particle world tasks that we study, each agent has access to positions of other agents and of landmarks, relative to itself. 
This is used to design shaping advice for individual agents at each time step. 

In this paper, we develop a framework for incorporating shaping advice in RL environments with continuous action spaces. 
Moreover, in the multi-agent case, different from prior works that emphasize learning joint policies for agents, we learn decentralized policies for each agent.

\section{Shaping Advice in Single-Agent RL}\label{Methods1}

This section presents our results when potential-based shaping advice is used to learn stochastic policies in single-agent RL. 
We term this \textbf{\emph{Shaping Advice in Single-agent RL (SAS)}}. 
This generalizes and extends the use of potential-based methods in the literature which have hitherto focused on augmenting value-based methods to learn optimal deterministic policies. 
We consider two variants- (i) \emph{SAS-Uniform} when the advice for all actions at a particular state is the same, and (ii) \emph{SAS-NonUniform}, when a higher weight might be given to some `good' actions over others at each state. 
We first prove that the optimality of stochastic policies is preserved when using \emph{SAS-Uniform}. 
Then, we describe an approach to integrate \emph{SAS-NonUniform} in to policy-gradient algorithms to enable effective learning of stochastic policies in single-agent RL. 

\subsection{Uniform Advice}\label{PBRSSection}

The following result shows that \emph{SAS-Uniform} preserves optimality even when the optimal policy is stochastic. 
%
%
%
\begin{prop}\label{PBRSResult}
Let 	${\pi}_M^*$ denote the optimal policy for an MDP $M$, and suppose that ${\pi}_M^*$ is a stochastic policy. 
Let ${\pi}_{M'}^*$ denote the optimal policy for the MDP $M'$ whose reward is augmented by $F:=\gamma\phi(s_{t+1})-\phi(s_t)$. 
Then, \emph{SAS-Uniform} preserves the optimality of stochastic policies- i.e., ${\pi}_M^* = {\pi}_{M'}^*$. 
\end{prop}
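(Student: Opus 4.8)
The plan is to show that adding the potential-based term $F=\gamma\phi(s_{t+1})-\phi(s_t)$ shifts the $Q$-function of \emph{every} policy (not just the optimal one) by a state-dependent constant $-\phi(s)$, and then to argue that such a shift leaves the optimization over stochastic policies unchanged. First I would fix an arbitrary stationary policy $\pi$ and compute $Q^{\pi}_{M'}(s,a)$ directly from its definition as the expected discounted sum of the augmented reward $R+F$ along trajectories induced by $\pi$. The key algebraic step is that the $F$-contribution telescopes: writing out $\sum_{t=0}^{\infty}\gamma^{t}\big(\gamma\phi(s_{t+1})-\phi(s_t)\big)$ and reindexing, all terms cancel except the initial $-\phi(s_0)$, so $\mathbb{E}_{\tau\sim\pi}\big[\sum_{t\ge 0}\gamma^{t}F_t\,|\,s_0=s,a_0=a\big]=-\phi(s)$, giving
\begin{align*}
	Q^{\pi}_{M'}(s,a) = Q^{\pi}_{M}(s,a) - \phi(s)
\end{align*}
for every $\pi$, $s$, $a$. (One should note for rigour that, since $R<\infty$ and $F<\infty$ and $\gamma<1$, the series converge absolutely and the telescoping/reindexing is justified; this is where I would be slightly careful rather than hand-wave.)

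Next I would pass from $Q$ to the objective actually being optimized over stochastic policies. For a randomized policy $\pi(\cdot\,|\,s)$, the state value is $V^{\pi}_{M'}(s)=\mathbb{E}_{a\sim\pi(\cdot|s)}[Q^{\pi}_{M'}(s,a)] = \mathbb{E}_{a\sim\pi(\cdot|s)}[Q^{\pi}_{M}(s,a)] - \phi(s) = V^{\pi}_{M}(s)-\phi(s)$, again because $\phi(s)$ does not depend on $a$ and hence comes out of the expectation. Consequently $J_{M'}(\pi) = \mathbb{E}_{s_0\sim\rho_0}[V^{\pi}_{M'}(s_0)] = J_M(\pi) - \mathbb{E}_{s_0\sim\rho_0}[\phi(s_0)]$, i.e. the two objectives differ by a constant that is independent of $\pi$. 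Therefore any policy maximizing $J_{M'}$ also maximizes $J_M$ and vice versa, and in particular the (assumed) optimal stochastic policy is the same for both MDPs: $\pi^*_M=\pi^*_{M'}$. If the paper prefers a Bellman-optimality argument over a policy-performance argument, the same conclusion follows from the fixed-point equation $V^{*}_{M'}(s)=\max_{a}\big(R(s,a)+\gamma\phi$-term$+\gamma\mathbb{E}[V^{*}_{M'}(s')]\big)$ together with the ansatz $V^{*}_{M'}=V^{*}_{M}-\phi$, and noting that the $\arg\max$ (and hence the maximizing distribution over actions) is unchanged because $\phi(s)$ is a common additive offset across all $a$.

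The main obstacle, and the only genuinely non-trivial point, is the uniqueness/identification of the optimal stochastic policy: shifting $J$ by a $\pi$-independent constant preserves the \emph{set} of maximizers, so "$\pi^*_M=\pi^*_{M'}$" should be read as the statement that the optimal policies coincide (if one is unique, so is the other and they are equal; if there are multiple optima, the optimal sets are identical). I would state this carefully to avoid overclaiming. Everything else — the telescoping of the potential sum, pulling $\phi(s)$ out of the action-expectation, and taking the expectation over $\rho_0$ — is routine given the finiteness assumptions already in force ($R<\infty$, $F<\infty$, $\gamma$ a discount factor), so I would not belabor those computations.
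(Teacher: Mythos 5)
Your proposal is correct and follows essentially the same route as the paper: both arguments rest on the telescoping of $\sum_t \gamma^t\bigl(\gamma\phi(s_{t+1})-\phi(s_t)\bigr)$ down to $-\phi(s_0)$, so that $J_{M'}(\pi)=J_M(\pi)-\mathbb{E}_{s_0\sim\rho_0}[\phi(s_0)]$ differs from $J_M(\pi)$ by a policy-independent constant, leaving the maximizer unchanged. Your detour through $Q^{\pi}_{M'}=Q^{\pi}_{M}-\phi$ for every policy, and your remarks on absolute convergence and on the set of maximizers, are just more careful packaging of the same computation the paper performs directly on the objective.
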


\begin{proof}
	The goal in the original MDP $M$ was to find a policy $\pi$ in order to maximize:
	\begin{align}\label{PBRS_M}
		{\pi}_M^* &= \argmax_{\pi}   \mathbb{E}_{\tau \sim \pi}\left[\sum_{t=0}^{\infty}\gamma^t R(s_t,a_t)\right].
	\end{align}
	
	When using \emph{SAS-Uniform}, the goal is to determine a policy so that:
	\begin{align}
		&{\pi}_{M'}^* = \argmax_{\pi} \mathbb{E}_{\tau \sim \pi}\big[\sum_{t=0}^{\infty}\gamma^t \big(R(s_t,a_t)+ F(s_t,a_t,s_{t+1},a_{t+1})\big)\big] \nonumber\\
		&= \argmax_{\pi}  \mathbb{E}_{\tau \sim \pi}\big[\sum_{t=0}^{\infty}\gamma^t \big(R(s_t,a_t)+\gamma\phi(s_{t+1})-\phi(s_t)\big)\big] \nonumber\\
		&= \argmax_{\pi} \bigg[\mathbb{E}_{\tau \sim \pi}\big[\sum_{t=0}^{\infty}\gamma^t R(s_t,a_t)\big]-\mathbb{E}_{\tau \sim \pi}\big[\phi(s_0)\big]\bigg]\nonumber\\
		&=\argmax_{\pi}~ \mathbb{E}_{\tau \sim \pi}\big[\sum_{t=0}^{\infty}\gamma^t R(s_t,a_t)\big]-\int_{s}\rho_0(s)\phi(s)\text{d}s.\label{PBRS_M'}
	\end{align}
	The last term in Equation (\ref{PBRS_M'}) is constant, and does not affect the identity of the maximizing policy of (\ref{PBRS_M}).
\end{proof}

\subsection{Nonuniform Advice}\label{PBASection}

Although \emph{SAS-Uniform} preserves optimality of policies in several settings, it suffers from the drawback of being unable to encode richer information, such as desired relations between states and actions. 
The authors of \cite{wiewiora2003principled} proposed \emph{potential-based nonuniform advice}, a scheme that augmented the potential function by including actions as an argument together with states. 
In this part, we show that when using \emph{SAS-NonUniform}, recovering the optimal policy can be difficult if the optimal policy is stochastic. 
To overcome this barrier, we propose a novel way to impart prior information about the environment in order to use \emph{SAS-NonUniform} to learn a stochastic policy.
\subsubsection{Stochastic policy learning with nonuniform advice}

Assume that we can compute $Q^{*}_{M}(s,a)$, the optimal value for state-action pair $(s,a)$ in MDP $M$. 
The optimal stochastic policy for $M$ is $\pi^*_M = \argmax_{\tau \sim \pi}\mathbb{E}_{\pi}\big[Q^{*}_{M}(s,a)\big]$. 
From Eqn. (\ref{PBAq}), the optimal stochastic policy for the modified MDP $M'$ 
is given by $\pi^*_{M'} = \argmax_{\pi}\mathbb{E}_{\tau \sim \pi}\big[Q^{*}_{M}(s,a)-\phi(s,a)\big]$. 
Without loss of generality, $\pi^*_M \neq \pi^*_{M'}$. 
If the optimal policy is deterministic, then the policy for $M$ can be recovered easily from that for $M'$ using Eqn. (\ref{PBAp}). 
However, for stochastic optimal policies, we will need to average over trajectories of the MDP, which makes it difficult to recover the optimal policy for $M$ from that of $M'$. 

In what follows, we will propose a novel way to take advantage of \emph{SAS-NonUniform} in the policy gradient framework in order to directly learn a stochastic policy.
\subsubsection{Imparting nonuniform advice in policy gradient}

Let $J_M(\theta)$ denote the value of a parameterized policy $\pi_{\theta}$ in MDP $M$. 
That is, $J_M(\theta) = \mathbb{E}_{\tau \sim \pi_{\theta}}\left[\sum_{t=0}^{\infty}\gamma^t R(s_t,a_t)\right]$. 
Following the policy gradient theorem \cite{sutton2018reinforcement}, and defining $G(s_t,a_t):=\sum_{i=t}^{i=\infty}\gamma^{i-t}r_i$, the gradient of $J(\theta)$ with respect to $\theta$ is:
\begin{align}\label{REINFORCE}
	\nabla_{\theta}J_M(\theta) = \mathbb{E}_{\tau \sim \pi_{\theta}}\big[G(s_t,a_t)\nabla_{\theta}\log\pi_{\theta}(a_t|s_t)\big].
\end{align}
Then, $\mathbb{E}_{\tau \sim\pi_{\theta}}\big[G(s_t,a_t)\big]=Q^{\pi_{\theta}}(s_t,a_t)$. 

REINFORCE \cite{sutton2018reinforcement} is a policy gradient method that uses Monte Carlo simulation to learn $\theta$, where the parameter update is performed only at the end of an episode (a trajectory of length $T$). 
If we apply the look-ahead variant of \emph{SAS-NonUniform} as in Equation (\ref{lookaheadPBA}) along with REINFORCE, then the total return from time $t$ is given by:
\begin{align}
	\begin{split}
		G^{a}(s_t,a_t)&=\sum_{i=t}^{i=T}\gamma^{i-t}r_i+\gamma^{T-t}\phi(s_T,a_T)-\phi(s_t,a_t) \\
		&=G(s_t,a_t)+\gamma^{T-t}\phi(s_T,a_T)-\phi(s_t,a_t).
	\end{split}
\end{align}

Notice that if $G^{a}(s_t,a_t)$ is used in Eqn. (\ref{REINFORCE}) instead of $G(s_t,a_t)$, then the policy gradient is biased. 
One way to resolve the problem is to add the difference $-\gamma^{T-t}\phi(s_T,a_T)+\phi(s_t,a_t)$ to $G^{a}(s_t,a_t)$. 
However, this makes the learning process identical to the original REINFORCE and nonuniform advice is not used. 
When using nonuniform advice in a policy gradient setup, it is important to add the term $\phi(s,a)$ so that the policy gradient is unbiased, and also leverage the advantage that nonuniform advice offers during learning.
%
\subsection{Analysis and Algorithm} 

To integrate \emph{SAS-NonUniform} with policy gradient-based techniques, we turn to temporal difference (TD) methods. 
TD methods update estimates of the accumulated return based in part on other learned estimates, before the end of an episode.
A popular TD-based policy gradient method is the actor-critic framework \cite{sutton2018reinforcement}. 
In this setup, after performing action $a_t$ at step $t$, the accumulated return $G(s_t,a_t)$ is estimated by $Q_M(s_t,a_t)$ which, in turn, is estimated by $r_t+\gamma V_M(s_{t+1})$. 
It should be noted that the estimates are unbiased.

When the reward is augmented with look-ahead \emph{SAS-NonUniform}, the accumulated return is changed to $Q_{M'}(s_t,a_t)$, which is estimated by $r_t+\gamma\phi(s_{t+1},a_{t+1})-\phi(s_t,a_t)+\gamma V_{M'}(s_{t+1})$. 
From Eqn. (\ref{PBAq}), at steady state, $Q_M(s_t,a_t)-Q_{M'}(s_t,a_t)=\phi(s_t,a_t)$. 
Intuitively, to keep policy gradient unbiased when augmented with look-ahead nonuniform advice, we can add $\phi(s_t,a_t)$ at each training step. 
In other words, we can use $r_t+\gamma\phi(s_{t+1},a_{t+1})+\gamma V_{M'}(s_{t+1})$ as the estimated return. 
It should be noted that before the policy reaches steady state, adding $\phi(s_t,a_t)$ at each time step will not cancel out the effect of nonuniform advice. 
This is unlike in REINFORCE, where the addition of this term negates the effect of using nonuniform advice. 
In the advantage actor-critic, an advantage term is used instead of the Q-function in order to reduce the variance of the estimated policy gradient. 
In this case also, the potential term $\phi(s_t,a_t)$ can be added in order to keep the policy gradient unbiased. 
\begin{algorithm}
	\small
	\caption{SAS: Shaping Advice in Single-agent RL}
	\begin{algorithmic}[1] \label{Algo1}
		\renewcommand{\algorithmicrequire}{\textbf{Input:}}
%
%
%
%
		\REQUIRE Differentiable policy and value functions $\pi_{\theta}(a|s)$, $V_{\omega}(s)$; shaping advice $\phi(s,a)$; maximum episode length $T_{max}$.
		
		\textbf{Initialization}: \\
		policy parameter $\theta$, value parameter $\omega$, learning rate $\alpha^{\theta}$ and $\alpha^{\omega}$, discount factor $\gamma$.
		\STATE{$T = 0$}	
		\REPEAT
		\STATE initialize state $s_0$, $t \leftarrow 0$
		\REPEAT
		\STATE Sample action $a_t \sim \pi_{\theta}(\cdot|s_t)$
		\STATE Take action $a_t$, observe reward $r_t$, next state $s_{t+1}$
		\STATE $R= \begin{cases}
			0, & \text{if }
			\begin{aligned}[t]
				s_{t+1} \text{ is a terminal state },
			\end{aligned}
			\\
			V_{\omega}(s_{t+1}), & \text{otherwise.}
		\end{cases}$
		
		\IF{use look-ahead advice} 
		\STATE $\delta_t=r_t + \gamma\phi(s_{t+1},a_{t+1})-\phi(s_t,a_t)+\gamma R - V_{\omega}(s_t)$ 
		\STATE Update $\theta \leftarrow \theta + \alpha^{\theta} \big(\delta_t+\phi(s_t,a_t)\big)\nabla_{\theta}\log\pi_{\theta}(a_t|s_t)$
		
		\ELSE 
		\STATE $\delta_t = r_t + \phi(s_{t},a_{t})-\gamma^{-1}\phi(s_{t-1},a_{t-1})+\gamma R - V_{\omega}(s_t)$
		\STATE Update $\theta \leftarrow \theta + \alpha^{\theta} \delta_t\nabla_{\theta}\log\pi_{\theta}(a_t|s_t)$
		\ENDIF
		
		\STATE Update $\omega \leftarrow \omega - \alpha^{\omega} \delta_t\nabla_{\omega}V_{\omega}(s_t)$
		
		\UNTIL{$s_{t+1}$ is a terminal state}
		\STATE $T \leftarrow T+1$
		\UNTIL{$T>T_{max}$}	
	\end{algorithmic}
\end{algorithm}

A procedure for augmenting the advantage actor-critic with \emph{SAS-NonUniform} is presented in Algorithm \ref{Algo1}. 
$\alpha^{\theta}$ and $\alpha^{\omega}$ denote learning rates for the actor and critic respectively. 
When applying look-ahead nonuniform advice, at training step $t$, parameter $\omega$ of the critic $V_{\omega}(s)$ is updated as follows:
\begin{align}
	\delta^a_t &= r_t + \gamma\phi(s_{t+1},a_{t+1})-\phi(s_t,a_t)+\gamma V_{\omega}(s_{t+1}) - V_{\omega}(s_t)\nonumber \\
	\omega &= \omega - \alpha^{\omega} \delta^a_t\nabla_{\omega}V_{\omega}(s_t),\nonumber
\end{align}
where $\delta^a_t$ is the estimation error of the state value after receiving new reward $[r_t + \gamma\phi(s_{t+1},a_{t+1})-\phi(s_t,a_t)]$ at step $t$.
To ensure an unbiased estimate of the policy gradient, the potential term $\phi(s_t,a_t)$ is added while updating $\theta$ as \cite{xiao2019potential}:
\begin{align}
	\theta = \theta + \alpha^{\theta} \big(\delta^a_t+\phi(s_t,a_t)\big)\nabla_{\theta}\log\pi_{\theta}(a_t|s_t).\nonumber
\end{align}

A similar method can be used when learning with look-back nonuniform advice. 
In this case, the critic and the policy parameter are updated as follows:
\begin{align}
	\delta^b_t &= r_t + \phi(s_{t},a_{t})-\gamma^{-1}\phi(s_{t-1},a_{t-1})+\gamma V_{\omega}(s_{t+1}) - V_{\omega}(s_t)\nonumber\\
	\omega &= \omega - \alpha^{\omega} \delta^b_t\nabla_{\omega}V_{\omega}(s_t),\nonumber\\
	\theta &= \theta + \alpha \big(\delta^b_t+\gamma^{-1}\mathbb{E}\big[\phi(s_{t-1},a_{t-1})|s_t\big]\big)\nabla_{\theta}\log\pi_{\theta}(a_t|s_t)\label{addback}
\end{align}

In the above case, the potential term need not be added to ensure an unbiased estimate. 
Then, the policy parameter update becomes:
\begin{align}\label{noaddback}
	\theta = \theta + \alpha \delta^b_t\nabla_{\theta}\log\pi_{\theta}(a_t|s_t),
\end{align}
which is exactly the policy update of the advantage actor-critic. 
This is formally stated in Proposition \ref{PBAprop} 
\begin{prop} \label{PBAprop}
	When the actor-critic is augmented with look-back variant of \emph{SAS-NonUniform}, Equations (\ref{addback}) and (\ref{noaddback}) are equal in the sense of expectation. That is
	\begin{align}   
		\mathbb{E}_{(s_t,a_t) \sim \rho^{\pi_\theta}}\big[\big(\delta^b_t+&\gamma^{-1}\mathbb{E}\big[\phi(s_{t-1},a_{t-1})|s_t\big]\big)\nabla_{\theta}\log\pi_{\theta}(a_t|s_t)\big] \nonumber\\
		=	\quad&\mathbb{E}_{(s_t,a_t) \sim \rho^{\pi_\theta}}\big[\delta^b_t\nabla_{\theta}\log\pi_{\theta}(a_t|s_t)\big],
	\end{align} 
	where $\rho^{\pi_\theta}$ is the distribution induced by the policy $\pi_\theta$.
\end{prop}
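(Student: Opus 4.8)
The plan is to show that the extra term $\gamma^{-1}\mathbb{E}[\phi(s_{t-1},a_{t-1})\mid s_t]$ contributes nothing to the expected policy gradient, so that adding it in \eqref{addback} leaves the update equal in expectation to \eqref{noaddback}. Concretely, it suffices to prove
\[
\mathbb{E}_{(s_t,a_t)\sim\rho^{\pi_\theta}}\big[\,\gamma^{-1}\,\mathbb{E}[\phi(s_{t-1},a_{t-1})\mid s_t]\,\nabla_\theta\log\pi_\theta(a_t\mid s_t)\,\big]=0 .
\]
First I would use the tower property to replace $\mathbb{E}[\phi(s_{t-1},a_{t-1})\mid s_t]$ by $\phi(s_{t-1},a_{t-1})$ inside the outer expectation, writing the joint expectation over the relevant portion of the trajectory $(s_{t-1},a_{t-1},s_t,a_t)$ under $\rho^{\pi_\theta}$. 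Then I would condition on $s_t$ (equivalently, integrate out $a_t$ last): since $s_{t-1},a_{t-1}$ are independent of $a_t$ given $s_t$, the factor $\gamma^{-1}\phi(s_{t-1},a_{t-1})$ can be pulled outside the inner integral over $a_t$, leaving the standard score-function identity
\[
\mathbb{E}_{a_t\sim\pi_\theta(\cdot\mid s_t)}\big[\nabla_\theta\log\pi_\theta(a_t\mid s_t)\big]=\nabla_\theta\!\!\int_A \pi_\theta(a_t\mid s_t)\,\mathrm{d}a_t=\nabla_\theta 1=0 .
\]
Taking the outer expectation over $(s_{t-1},a_{t-1},s_t)$ of this zero quantity gives the claim, and hence the two updates coincide in expectation.

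The key steps, in order, are: (i) expand both sides and cancel the common $\delta^b_t\nabla_\theta\log\pi_\theta(a_t\mid s_t)$ terms, reducing the claim to the display above; (ii) apply the tower property to drop the conditional expectation; (iii) factor the joint expectation as an outer expectation over $(s_{t-1},a_{t-1},s_t)$ and an inner one over $a_t\sim\pi_\theta(\cdot\mid s_t)$, justifying that $\phi(s_{t-1},a_{t-1})$ is measurable with respect to the outer $\sigma$-algebra; (iv) invoke the baseline/score-function lemma $\mathbb{E}_{a\sim\pi_\theta}[\nabla_\theta\log\pi_\theta(a\mid s)]=0$, the same fact that justifies subtracting a state-dependent baseline in the policy gradient theorem cited as \cite{sutton2018reinforcement}.

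The main obstacle is really a bookkeeping/measurability point rather than a hard estimate: one must be careful that under the state-visitation distribution $\rho^{\pi_\theta}$ the pair $(s_{t-1},a_{t-1})$ depends on $a_t$ only through $s_t$, so that conditioning on $s_t$ makes $\phi(s_{t-1},a_{t-1})$ act exactly like a state-dependent baseline with respect to the $a_t$-integration. This is where the Markov property of the underlying MDP and the form of $\rho^{\pi_\theta}$ enter; once that factorization is made explicit, the score-function identity finishes the argument immediately. I would also note in passing that the same reasoning explains why in REINFORCE the analogous term $\phi(s_{t-1},a_{t-1})$ is \emph{not} a baseline for the $a_t$-update when it appears multiplied into the full-trajectory return $G$ — there the potential at earlier steps is entangled with later actions — which is consistent with the discussion preceding the proposition.
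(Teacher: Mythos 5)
Your proposal is correct and follows essentially the same route as the paper: both reduce the claim to showing that $\mathbb{E}_{(s_t,a_t)\sim\rho^{\pi_\theta}}\big[\mathbb{E}[\phi(s_{t-1},a_{t-1})|s_t]\,\nabla_\theta\log\pi_\theta(a_t|s_t)\big]=0$ by observing that the conditional expectation is a function of $s_t$ alone (a state-dependent baseline) and then invoking the score-function identity $\nabla_\theta\int_A\pi_\theta(a|s)\,\mathrm{d}a=0$. Your tower-property detour and the explicit remark about conditional independence of $(s_{t-1},a_{t-1})$ from $a_t$ given $s_t$ just make explicit what the paper absorbs into writing the inner expectation as $f(s_t,\pi_\theta,\mathbb{T})$.
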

\begin{proof}
	It is equivalent to show that:
	\begin{align}   
		\mathbb{E}_{(s_t,a_t) \sim \rho^{\pi_\theta}}\big[\mathbb{E}\big[\phi(s_{t-1},a_{t-1})|s_t\big]\nabla_{\theta}\log\pi_{\theta}(a_t|s_t)\big] = 0.
	\end{align} 
	The inner expectation $\mathbb{E}\big[\phi(s_{t-1},a_{t-1})|s_t\big]$ is a function of $s_t$, policy $\pi_{\theta}$, and transition probability $\mathbb{T}$. 
	Denoting this expectation by $f(s_t,\pi_{\theta},\mathbb{T})$, we obtain:
	\begin{align}   
		&\mathbb{E}_{(s_t,a_t) \sim \rho^{\pi_\theta}}\big[f(s_t,\pi_{\theta},\mathbb{T})\nabla_{\theta}\log\pi_{\theta}(a_t|s_t)\big]\nonumber \\= 
		&\mathbb{E}_{s_t \sim
			\rho^{\pi_\theta}}\bigg[\mathbb{E}_{a_t\sim \pi_\theta}\big[f(s_t,\pi_{\theta},\mathbb{T})\nabla_{\theta}\log\pi_{\theta}(a_t|s_t)\big]\bigg]\nonumber \\= 
		&\mathbb{E}_{s_t \sim
			\rho^{\pi_\theta}}\bigg[\int_{A}\pi_{\theta}(a_t|s_t)f(s_t,\pi_{\theta},\mathbb{T})\frac{\nabla_{\theta}\pi_{\theta}(a_t|s_t)}{\pi_{\theta}(a_t|s_t)}\text{d}a\bigg] \nonumber\\= 
		&\mathbb{E}_{s_t \sim
			\rho^{\pi_\theta}}\bigg[f(s_t,\pi_{\theta},\mathbb{T})\nabla_{\theta}\int_{A}\pi_{\theta}(a_t|s_t)\text{d}a\bigg] = 0.
	\end{align}
	The last equality follows from the fact that the integral evaluates to $1$, and its gradient is $0$.
\end{proof}

\begin{rem}
	Look-back nonuniform advice could result in better performance compared to look-ahead nonuniform advice since look-back nonuniform advice does not involve estimating a future action. 
\end{rem}

\section{Shaping Advice in Multi-Agent RL}\label{Methods2}
\begin{figure}[!h]
	\centering
	\includegraphics[width=3.05 in]{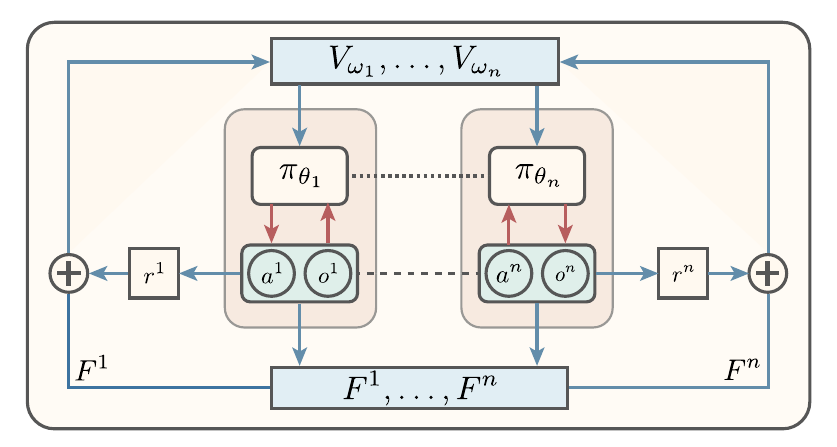} 
	\caption{Schematic of SAM. A centralized critic estimates value functions $V_{\omega_1},\dots,V_{\omega_n}$. Actions for an agent $i$ are sampled from its policy $\pi_{\theta_i}$ in a decentralized manner. Actions and observations of all agents are used to determine \emph{shaping advice} $F^1,\dots,F^n$. The advice $F^i$ is augmented to the reward $r^i$ from the RL environment. The workflow shown by blue arrows in the outer box is required only during training. During execution, only the workflow shown by the red arrows inside the inner boxes is needed.}\label{SAMARITANSchematic}
\end{figure}

This section introduces shaping advice in multi-agent reinforcement learning (\emph{SAM}).
The goal of SAM is to augment \emph{shaping advice} to the reward supplied by the MARL environment to provide immediate feedback to agents on their actions. 
SAM uses the CTDE paradigm wherein agents share parameters with each other during the training phase, but execute decentralized policies using only their own observations at test-time. 
%
Figure \ref{SAMARITANSchematic} shows a schematic of SAM. 
We subsequently detail how the shaping advice is provided to the agents, and analyze the optimality and convergence of policies when using SAM.

\subsection{Centralized Critic}

SAM uses a centralized critic during the training phase. 
Information about observations and actions of all agents is used to learn a decentralized policy for each agent. 
One way to do this is by using an actor-critic framework, which combines policy gradients with \emph{temporal difference (TD)} techniques. 

At time $t$, the joint action $(a^1_t,\dots,a^n_t)$ is used to estimate the accumulated return for each agent $i$ as $r^i_t + \gamma V^i (s_{t+1})$. 
This quantity is called the \emph{TD-target}. 
Subtracting $V^i(s_t)$ from the TD-target gives the \emph{TD-error}, which is an unbiased estimate of the agent advantage \cite{sutton2018reinforcement}. 
Each actor can then be updated following a gradient based on this TD-error. 

We learn a separate critic for each agent like in \cite{lowe2017multi}. However, the learning process can be affected when rewards provided by the environment are sparse. 
SAM uses a \emph{potential-based} heuristic as shaping advice that is augmented to the reward received from the environment. 
This resulting reward is less sparse and can be used by the agents to learn policies.

\subsection{Shaping Advice in Multi-Agent Actor-Critic}
%

We describe how to augment shaping advice to the multi-agent policy gradient to assign credit. 
We use the actor-critic framework with a centralized critic and decentralized actors. 

For an agent $i$, shaping advice $F^i$ is augmented to the environment reward $r^i$ at each time step. 
$F^i$ is specified by a difference of potentials (Eqn. (\ref{LAPBA}) or (\ref{LBPBA})). 
The centralized critic allows using observations and actions of all agents to specify $F^i$. 
Using look-ahead advice, $Q$-values in the modified game $\mathcal{G}'$ with reward $R^i+F^i$ and original game $\mathcal{G}$ with reward $R^i$ are related as \cite{wiewiora2003principled}:
\begin{align}
[Q^{\bm{\pi_\theta}}_i(s_t,a^i_t,a^{-i}_t)]_{\mathcal{G}}&=[Q^{\bm{\pi_\theta}}_i(s_t,a^i_t,a^{-i}_t)]_{\mathcal{G}'}\nonumber\\&\qquad \qquad+\phi_i(s_t,a^i_t,a^{-i}_t) \label{RelnGG'}
\end{align}

The accumulated return in $\mathcal{G}'$ for agent $i$ is then estimated by $r^i_t+ \gamma \phi_i(s_{t+1},a^i_{t+1},a^{-i}_{t+1}) - \phi_i(s_t,a^i_t,a^{-i}_t)+ \gamma V^i (s_{t+1})$. 
From Eqn. (\ref{RelnGG'}), we can add $\phi_i(s_t,a^i_t,a^{-i}_t)$ to the TD-target in $\mathcal{G}'$ at each time $t$ to keep the policy gradient unbiased in $\mathcal{G}$. 

Let the critic and actor in SAM for agent $i$ be respectively parameterized by $\omega_i$ and $\theta_i$.
When the actor is updated at a slower rate than critic, the asymptotic behavior of the critic can be analyzed by keeping the actor fixed using \emph{two time-scale stochastic approximation} methods \cite{borkar2009stochastic}. 
%
For agent $i$, the TD-error at time $t$ is given by:
\begin{align}
\delta^i_t&:=r^i_t+F^i_t+\gamma V_{\omega_i} (s_{t+1}) - V_{\omega_i} (s_t). \label{TDErrori}
\end{align}
The update of the critic can be expressed as a first-order ordinary differential equation (ODE) in $\omega_i$, given by: 
\begin{align}
\dot{\omega}_i &= \mathbb{E}_{\bm{\pi_\theta}}[\delta^i \nabla_{\omega_i} V_{\omega_i} (s_t)] \label{CriticODE}
\end{align}
Under an appropriate parameterization of the value function, this ODE will converge to an asymptotically stable equilibrium, denoted $\omega_i (\bm{\theta})$. 
At this equilibrium, the TD-error for agent $i$ is $\delta_{t,\omega_i(\bm{\theta})}^i = r^i_t+F^i_t +\gamma V_{\omega_i(\bm{\theta})} (s_{t+1}) - V_{\omega_i(\bm{\theta})} (s_t)$. 

The update of the actor can then be determined by solving a first order ODE in $\theta_i$. 
With look-ahead advice, a term corresponding to the shaping advice at time $t$ will have to be added to ensure an unbiased policy gradient (Eqn. (\ref{RelnGG'})). 
This ODE can be written as: 
\begin{align}
\dot{\theta}_i &= \mathbb{E}_{\bm{\pi_\theta}}[(\delta_{t,\omega_i(\bm{\theta})}^i +\phi_i(s_t,a^i_t,a^{-i}_t))  \nabla_{\theta_i} \log~\pi_{\theta_i}(a_t^i|o^i_t)] \label{ActorODE}
\end{align} 
A potential term will not have to be added to ensure an unbiased policy gradient when utilizing look-back advice. This insight follows from Proposition 3 in \cite{xiao2019potential} since we consider decentralized policies. 
%
\subsection{Analysis and Algorithm}

In this part, we present a proof of the convergence of the actor and critic parameters when learning with shaping advice. 
We also demonstrate that convergence of policy gradients and values when using SAM implies convergence of these quantities in the absence of SAM. 
This will guarantee that policies learned in the modified stochastic game $\mathcal{G}'$ will be locally optimal in the original game $\mathcal{G}$. 

For agent $i$, the update dynamics of the critic can be expressed by the ODE in Eqn. (\ref{CriticODE}). 
Assuming parameterization of $V(s)$ over a linear family, this ODE will converge to an asymptotically stable equilibrium \cite{borkar2009stochastic}. 
The actor update is then given by the ODE in Eqn. (\ref{ActorODE}). 
The parameters associated with the critics are assumed to be updated on a faster timescale than those of the actors. 
Then, the behaviors of the actor and critic can be analyzed separately using two timescale stochastic approximation techniques \cite{borkar2009stochastic}. 

\begin{ass}
	We make the following assumptions:
	\begin{enumerate}
		\item At any time $t$, an agent is aware of the actions taken by all other agents. Rewards received by the agents at each time step are uniformly bounded.
		\item The Markov chain induced by the agent policies is irreducible and aperiodic. 
		\item For each agent $i$, the update of its policy parameter $\theta_i$ includes a projection operator $\Gamma_i$, which projects $\theta_i$ onto a compact set $\Theta_i$. We assume that $\Theta_i$ includes a stationary point of $\nabla_{\theta_i} J_i(\bm{\theta})$ for each $i$. 
		\item For each agent $i$, its value function is parameterized by a linear family. That is, $V_{\omega_i} (s) = \Phi_i \omega_i$, where $\Phi_i$ is a known, full-rank feature matrix for each $i$. 
		\item For each agent $i$, the TD-error at each time $t$ and the gradients $\nabla_{\omega_i} V_{\omega_i}(s)$ are bounded, and the gradients 
		$\nabla_{\theta_i} \log ~\pi_{\theta_i} (\cdot|s_t)$ are Lipschitz with bounded norm. 
		\item The learning rates satisfy $\sum_t \alpha^\theta_t = \sum_t \alpha^\omega_t = \infty$, $\sum_t[(\alpha^\theta_t)^2 + (\alpha^\omega_t)^2] < \infty$, $\lim_{t \rightarrow \infty} \frac{\alpha^\theta_t}{\alpha^\omega_t}=0$.
	\end{enumerate}
\end{ass}
We first state a useful result from \cite{kushner2012stochastic}.
\begin{lm}[\cite{kushner2012stochastic}]\label{KushnerClarkLemma}
	Let $\Gamma:\mathbb{R}^k \rightarrow \mathbb{R}^k$ be a projection onto a compact set $K \subset \mathbb{R}^k$. 
	Define 
	\begin{align}
	\hat{\Gamma}(h(x)):&=\lim_{\epsilon \downarrow 0} \frac{\Gamma(x+\epsilon h(x))-x}{\epsilon} \nonumber
	\end{align}
	for $x \in K$ and $h:\mathbb{R}^k \rightarrow \mathbb{R}^k$ continuous on $K$. Consider the update $x_{t+1} = \Gamma (x_t + \alpha_t(h(x_t) + \xi_{t,1}+\xi_{t,2}))$ and its associated ODE $\dot{x} = \hat{\Gamma}(h(x))$. 
	Assume that: 
	
	i) $\{\alpha_t\}$ is such that $\sum_t \alpha_t = \infty$, $\sum_t \alpha_t^2 < \infty$; 
	
	ii) $\{\xi_{t,1}\}$ is such that for all $\epsilon > 0$, $\lim_t \mathbb{P} (\sup_{n \geq t} || \sum_{\tau = t}^n \alpha_{\tau} \xi_{\tau,1}|| \geq \epsilon) = 0$; 
	
	iii) $\{\xi_{t,2}\}$ is an almost surely bounded random sequence, and $\xi_{t,2} \rightarrow 0$ almost surely. 

	Then, if the set of asymptotically stable equilibria of the ODE in $\dot{x}$ is compact, denoted $K_{eq}$, the updates $x_{t+1}$ will converge almost surely to $K_{eq}$. 
\end{lm}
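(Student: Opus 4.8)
Since Lemma~\ref{KushnerClarkLemma} is quoted verbatim from \cite{kushner2012stochastic}, I only sketch the argument one would give; it is the classical \emph{ODE method} for projected stochastic approximation (cf.\ also \cite{borkar2009stochastic}). The plan is to pass from the discrete recursion to a continuous-time interpolation and show that this interpolation asymptotically tracks the projected ODE $\dot x = \hat\Gamma(h(x))$, and then to use the compactness of $K_{eq}$ to pin down where the trajectory ends up.

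First I would set $t_0 = 0$, $t_n := \sum_{k=0}^{n-1}\alpha_k$ (so $t_n\uparrow\infty$ by assumption (i)), and let $\bar x(\cdot)$ be the piecewise-linear interpolation of $\{x_n\}$ with $\bar x(t_n)=x_n$. Because $\Gamma$ projects onto the compact set $K$, every iterate lies in $K$; hence $\{x_n\}$ is bounded, $h$ is bounded on $K$ by continuity, and since $\alpha_n\to 0$ the time-shifted family $\bar x^{(m)}(\cdot):=\bar x(s_m+\cdot)$, for any $s_m\to\infty$, is equibounded and asymptotically equicontinuous, so by Arzel\`a--Ascoli it is relatively compact in $C(\mathbb{R};K)$. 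Any limit point is denoted $x^\infty(\cdot)$.

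The core step is to identify these limit points. Writing the recursion as $x_{n+1} = x_n + \alpha_n\big(h(x_n) + \xi_{n,1} + \xi_{n,2} + z_n\big)$, where $z_n$ is the boundary-correction term forced by the projection (bounded in norm by $\|h(x_n)+\xi_{n,1}+\xi_{n,2}\|$ and directed into $K$, the discrete analogue of $\hat\Gamma$), I would sum the recursion over a window $[t_m, t_m+T]$ and control each term. The $h$-term yields $\int_0^T h(x^\infty(s))\,\mathrm{d}s$ in the limit; the noise $\xi_{\cdot,1}$ contributes nothing since, by (ii), the accumulated weighted noise $\sup_{n\ge m}\|\sum_{\tau=m}^{n}\alpha_\tau\xi_{\tau,1}\|$ is asymptotically negligible on a probability-one set; the bias $\xi_{\cdot,2}$ contributes nothing since $\xi_{n,2}\to 0$ almost surely by (iii), so its weighted sums over finite windows vanish; and the correction terms $z_n$ accumulate exactly to the reflection term $\hat\Gamma$ in the limit. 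Hence every limit point $x^\infty(\cdot)$ is, almost surely, an absolutely continuous solution of the (Skorokhod-type) reflected ODE $\dot x=\hat\Gamma(h(x))$ that remains in $K$ for all $t\in\mathbb{R}$. It follows that the limit set $L := \bigcap_{m}\overline{\{x_n : n\ge m\}}$ is a nonempty, compact, connected set that is invariant under, and internally chain transitive for, the flow of this ODE. Now invoking the hypothesis that the asymptotically stable equilibria form the compact set $K_{eq}$: on a neighbourhood of $K_{eq}$ there is a strict Lyapunov function for the flow, and no internally chain transitive set other than subsets of $K_{eq}$ can lie in its domain of attraction, which forces $L\subseteq K_{eq}$; equivalently $x_t\to K_{eq}$ almost surely.

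The main obstacle is the core step: one must handle the projection/boundary term $z_n$ together with the comparatively weak noise assumption (ii) --- which only controls uniform smallness of the tail sums rather than, say, a square-summable martingale-difference condition --- so that the interpolation genuinely tracks the \emph{projected} ODE, including on $\partial K$ where $\hat\Gamma$ is nonsmooth. Verifying that the extracted limit trajectories remain in $K$ and solve the reflected ODE, rather than merely $\dot x = h(x)$, is the technically delicate point; everything else is a routine equicontinuity-plus-Lyapunov argument.
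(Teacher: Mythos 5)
The paper does not prove this lemma at all: it is imported verbatim from the stochastic-approximation literature (\cite{kushner2012stochastic}, see also \cite{borkar2009stochastic}) and used as a black box in Theorems \ref{ThmCriticConv} and \ref{ThmActorConv}, so there is no in-paper argument to compare yours against. Your sketch is the standard ODE-method proof for projected stochastic approximation and its outline is sound: piecewise-linear interpolation on the timescale $t_n=\sum_{k<n}\alpha_k$, boundedness from the projection onto the compact set $K$, Arzel\`a--Ascoli for the shifted interpolations, elimination of the $\xi_{\cdot,1}$ term via condition (ii) (the Kushner--Clark condition) and of the $\xi_{\cdot,2}$ term via condition (iii), identification of the reflection term $z_n$ with $\hat\Gamma$, and a Lyapunov/chain-transitivity argument to land in $K_{eq}$. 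You also correctly flag the genuinely delicate points (the boundary-correction term and the weakness of condition (ii) relative to a square-summable martingale-difference assumption). One caveat worth noting: as the lemma is stated here (and as you reproduce it), the final step is slightly loose --- asymptotic stability of $K_{eq}$ forces the limit set into $K_{eq}$ only once one knows the iterates enter a compact subset of its domain of attraction infinitely often (otherwise the internally chain transitive limit set could sit on another invariant set of the flow); the original Kushner--Clark theorem carries such a hypothesis, and your Lyapunov paragraph implicitly assumes it. That imprecision originates in the quoted statement, not in your argument.
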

Let $\{\mathcal{F}^\omega_t\}$ be the filtration where $\mathcal{F}^\omega_t:=\sigma(s_{\tau},$ $r^1_{\tau},\dots,r^n_{\tau},\omega_{1_\tau},\dots,\omega_{n_{\tau}}:\tau \leq t)$ is an increasing $\sigma-$algebra generated by iterates of $\omega_i$ up to time $t$. 
We first analyze behavior of the critic when parameters of the actor are fixed. 

\begin{thm}\label{ThmCriticConv}
	For a fixed policy $\bm{\pi_\theta}$, the update $\omega_i \leftarrow \omega_i - \alpha_t^\omega \delta^i_t \nabla_{\omega_i} V_{\omega_i} (s_t)$ converges almost surely to the set of asymptotically stable equilibria of the ODE $\dot{\omega}_i=h_i(\omega_i):=\mathbb{E}_{\bm{\pi_\theta}}[\delta^i_t \nabla_{\omega_i}V_{\omega_i}(s_t)|\mathcal{F}^\omega_t]$. 
\end{thm}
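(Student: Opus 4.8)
The plan is to recognize the critic recursion, with the actor parameters held fixed, as a single--timescale stochastic approximation scheme driven by Markov noise, and then to verify the hypotheses of Lemma~\ref{KushnerClarkLemma}. First I would put the update into the canonical form $\omega_{i,t+1} = \omega_{i,t} + \alpha^\omega_t\big(h_i(\omega_{i,t}) + \xi_{t,1} + \xi_{t,2}\big)$ by adding and subtracting the mean field $h_i(\omega_i) := \mathbb{E}_{\bm{\pi_\theta}}[\delta^i_t\,\nabla_{\omega_i}V_{\omega_i}(s_t)\mid\mathcal{F}^\omega_t]$ appearing in Eqn.~(\ref{CriticODE}), with $\delta^i_t$ as in Eqn.~(\ref{TDErrori}). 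Here there is no projection and no vanishing-bias term for the critic, so the only noise to control is the deviation of $\delta^i_t\,\nabla_{\omega_i}V_{\omega_i}(s_t)$ from $h_i(\omega_{i,t})$, which I would split into a martingale-difference part $\xi_{t,1}$ (the deviation from the one-step conditional expectation) and a Markov-noise part $\xi_{t,2}$ accounting for the gap between the conditional law of $s_t$ along the sampled trajectory and the stationary distribution $d^{\bm{\pi_\theta}}$ of the induced chain.

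Next I would discharge the noise conditions. Under the standing assumptions rewards, and hence $F^i_t$, the TD-errors $\delta^i_t$, and the gradients $\nabla_{\omega_i}V_{\omega_i}$, are uniformly bounded, so $\{\xi_{t,1}\}$ is a bounded martingale-difference sequence; together with $\sum_t(\alpha^\omega_t)^2<\infty$ this makes $\sum_\tau \alpha^\omega_\tau\xi_{\tau,1}$ an $L^2$-bounded martingale, which converges almost surely, giving condition~(ii) of Lemma~\ref{KushnerClarkLemma}. For $\xi_{t,2}$ I would invoke the standard Markov-noise argument: since the induced chain is irreducible and aperiodic it is geometrically ergodic, so the Poisson equation for the chain admits a bounded solution, and the usual decomposition expresses $\xi_{t,2}$ as a further martingale-difference term (again covered by~(ii)) plus a telescoping residual that vanishes under the step-size and boundedness assumptions. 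This reduces the limiting behavior of the recursion to that of the ODE $\dot\omega_i = h_i(\omega_i)$.

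Then I would identify this ODE and establish its stability. With the linear parameterization $V_{\omega_i}(s)=\Phi_i\omega_i$ and the stationary distribution $d^{\bm{\pi_\theta}}$, the mean field is affine: $h_i(\omega_i) = b_i - A_i\omega_i$, where $A_i = \Phi_i^\top D_i\,(I-\gamma P^{\bm{\pi_\theta}})\,\Phi_i$, $b_i = \Phi_i^\top D_i\,\overline{R^i+F^i}$, $D_i$ is the diagonal matrix of stationary probabilities, and $P^{\bm{\pi_\theta}}$ the transition matrix under the joint policy. Because $\Phi_i$ is full rank and $I-\gamma P^{\bm{\pi_\theta}}$ has a positive-definite symmetric part in the $D_i$-weighted inner product, $A_i$ is positive definite, so $\dot\omega_i = b_i - A_i\omega_i$ is globally asymptotically stable with the unique equilibrium $\omega_i(\bm{\theta}) = A_i^{-1}b_i$; in particular the set of asymptotically stable equilibria is a singleton, hence compact, as required. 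A Borkar--Meyn-type argument, or directly the stable linear structure (whose scaled ODE $\dot\omega_i = -A_i\omega_i$ has a globally asymptotically stable origin), shows the iterates $\{\omega_{i,t}\}$ stay almost surely bounded, so Lemma~\ref{KushnerClarkLemma} applies and yields $\omega_{i,t}\to\omega_i(\bm{\theta})$ almost surely.

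The main obstacle is the Markov-noise term $\xi_{t,2}$: because the samples $(s_t,a_t,s_{t+1})$ are generated along a single trajectory rather than drawn i.i.d.\ from $d^{\bm{\pi_\theta}}$, the naive per-step drift is the one-step conditional expectation rather than the stationary mean field, and making the limiting ODE autonomous requires controlling the chain's mixing error — this is precisely where irreducibility, aperiodicity, and the uniform boundedness of rewards, TD-errors, and value gradients are used. The remaining steps are routine verifications of the Kushner--Clark hypotheses.
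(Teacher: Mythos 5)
Your proposal is correct and follows essentially the same route as the paper: cast the critic recursion as a stochastic approximation, verify the hypotheses of Lemma~\ref{KushnerClarkLemma} via the martingale decomposition $\xi_{t,1}$ and the martingale convergence theorem, and then use the linear parameterization $V_{\omega_i}=\Phi_i\omega_i$ to conclude that the mean-field ODE is linear with a unique asymptotically stable equilibrium governed by a matrix of the form $\gamma T_{\bm{\pi_\theta}}-I$. You are, however, more careful than the paper in two places. First, the paper defines only the martingale-difference noise $\xi_{t,1}$ and implicitly identifies the one-step conditional expectation $\mathbb{E}_{\bm{\pi_\theta}}[\delta^i_t\nabla_{\omega_i}V_{\omega_i}(s_t)\mid\mathcal{F}^\omega_t]$ with a function of $\omega_i$ alone; your explicit Markov-noise term $\xi_{t,2}$, discharged via geometric ergodicity and the Poisson equation, is exactly what is needed to make that identification rigorous when samples come from a single trajectory rather than the stationary distribution. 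Second, you address almost-sure boundedness of the iterates (via the Borkar--Meyn scaled-ODE criterion), which Lemma~\ref{KushnerClarkLemma} needs since the critic update carries no projection; the paper does not comment on this. Your explicit form $A_i=\Phi_i^\top D_i(I-\gamma P^{\bm{\pi_\theta}})\Phi_i$ with positive-definite $D_i$-weighted symmetric part is also a sharper justification of stability than the paper's appeal to the eigenvalues of $\gamma T_{\bm{\pi_\theta}}-I$. In short, same skeleton, but your version fills genuine gaps in the paper's argument.
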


\begin{proof}
	Let $\xi^i_{t,1}:=\delta^i_t \nabla_{\omega_i} V_{\omega_i} (s_t)-\mathbb{E}_{\bm{\pi_\theta}}[\delta^i_t \nabla_{\omega_i}V_{\omega_i}(s_t)|\mathcal{F}^\omega_t]$. 
	Then, the $\omega_i$ update can be written as $\omega_i \leftarrow \omega_i- \alpha_t^\omega[h_i(\omega_i)+\xi^i_{t,1}]$, where  
	$h_i(\omega_i)$ is continuous in $\omega_i$. 
	Since $\delta^i_t$ and $\nabla_{\omega_i} V_{\omega_i}(s)$ are bounded, $\xi^i_{t,1}$ is almost surely bounded. 
	
	Let $M^i_t:= \sum_{\tau = 0}^t \alpha^\omega_{\tau}  \xi^i_{\tau,1}$. 
	Then $\{M^i_t\}$ is a martingale\footnote{A martingale \cite{williams1991probability} is a stochastic process $S_1,S_2,\dots$ that satisfies $\mathbb{E}(|S_n| < \infty)$ and $\mathbb{E}(S_{n+1}|S_1,\dots,S_n) = S_n$ for each $n = 1,2,\dots.$.}
	, and $\sum_t ||M^i_t-M^i_{t-1}||^2 = \sum_t ||\alpha^\omega_t \xi^i_{t,1}||^2 < \infty$ almost surely. 
	Therefore, from the martingale convergence theorem \cite{williams1991probability}, the sequence $\{M^i_t\}$ converges almost surely. 
	Therefore, the conditions in Lemma \ref{KushnerClarkLemma} are satisfied. 
	
	Since $V_{\omega_i} = \Phi_i \omega_i$, with $\Phi_i$ a full-rank matrix, $h_i(\omega_i)$ is a linear function, and the ODE will have a unique equibrium point. 
	This will be an asymptotically stable equilibrium since ODE dynamics will be governed by a matrix of the form $(\gamma T_{\pi} - I)$. 
	Here, $I$ is an identity matrix, and $T_{\pi}$ is a stochastic state-transition matrix under policy $\pi$, whose eigen-values have (strictly) negative real parts \cite{prasad2015two}. 
	Denote this asymptotically stable equilibrium by $\omega_i(\bm{\theta})$. 
\end{proof}
We can now analyze the behavior of the actor, assuming that the critic parameters have converged to an asymptotically stable equilibrium. 
With $\omega_i (\bm{\theta})$ a limit point of the critic update, let $\delta_{t,\omega_i(\bm{\theta})}^i = r^i_t+F^i_t +\gamma V_{\omega_i(\bm{\theta})} (s_{t+1}) - V_{\omega_i(\bm{\theta})} (s_t)$. 
When using look-ahead or look-back advice, define $\tilde{\delta}^i_{t,\omega_i(\bm{\theta})}$ as:
\begin{align}
\begin{split}
&\text{look-ahead:}\quad\tilde{\delta}^i_{t,\omega_i(\bm{\theta})}:= (\delta_{t,\omega_i(\bm{\theta})}^i+\phi_i(s_t,a^i_t,a^{-i}_t)) \\
&\text{look-back:} \quad\tilde{\delta}^i_{t,\omega_i(\bm{\theta})}:= \delta_{t,\omega_i(\bm{\theta})}^i
. 
\end{split}\label{delta}
\end{align}

Let $\{\mathcal{F}^\theta_t\}$ be a filtration where $\mathcal{F}^\theta_t:=\sigma(\bm{\theta}_{\tau}:=[\theta_{1_\tau}\dots \theta_{n_\tau}]:\tau \leq t)$ is an increasing $\sigma-$algebra generated by iterates of $\theta_i$ up to time $t$. 

\begin{thm} \label{ThmActorConv} 
	The update $\theta_i \leftarrow \Gamma_i[\theta_i + \alpha_t^\theta \tilde{\delta}^i_t $ $ \nabla_{\theta_i} \log~\pi_{\theta_i}(a_t^i|o^i_t)]$ converges almost surely to the set of asymptotically stable equilbria of the ODE $\dot{\theta}_i = \hat{\Gamma}_i(h_i(\theta_i))$, where $h_i(\theta_i) = \mathbb{E}_{\bm{\pi_\theta}}[\tilde{\delta}^i_{t,\omega_i(\bm{\theta})}\nabla_{\theta_i} \log~\pi_{\theta_i}(a_t^i|o^i_t)|\mathcal{F}^\theta_t]$.
\end{thm}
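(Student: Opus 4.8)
The plan is to recognize that this is a two-timescale stochastic approximation statement for the actor recursion, and to verify the hypotheses of Lemma \ref{KushnerClarkLemma} with the critic frozen at the limit point $\omega_i(\bm{\theta})$ supplied by Theorem \ref{ThmCriticConv}. First I would rewrite the update as $\theta_i \leftarrow \Gamma_i[\theta_i + \alpha^\theta_t(h_i(\theta_i) + \xi^i_{t,1} + \xi^i_{t,2})]$, where $h_i(\theta_i)$ is the mean field in the statement, the genuine (martingale-difference) noise is $\xi^i_{t,1} := \tilde{\delta}^i_{t,\omega_{i,t}}\nabla_{\theta_i}\log\pi_{\theta_i}(a^i_t|o^i_t) - \mathbb{E}_{\bm{\pi_\theta}}[\tilde{\delta}^i_{t,\omega_{i,t}}\nabla_{\theta_i}\log\pi_{\theta_i}(a^i_t|o^i_t)\,|\,\mathcal{F}^\theta_t]$, and the bias term is $\xi^i_{t,2} := \mathbb{E}_{\bm{\pi_\theta}}[(\tilde{\delta}^i_{t,\omega_{i,t}}-\tilde{\delta}^i_{t,\omega_i(\bm{\theta})})\nabla_{\theta_i}\log\pi_{\theta_i}(a^i_t|o^i_t)\,|\,\mathcal{F}^\theta_t]$, which measures how far the running critic iterate $\omega_{i,t}$ is from its equilibrium $\omega_i(\bm{\theta})$.

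Next I would check the three conditions of Lemma \ref{KushnerClarkLemma}. Condition (i) on the step sizes is the stated learning-rate assumption $\sum_t \alpha^\theta_t = \infty$, $\sum_t (\alpha^\theta_t)^2 < \infty$. For condition (ii), the partial sums $M^i_t := \sum_{\tau=0}^t \alpha^\theta_\tau \xi^i_{\tau,1}$ form an $\{\mathcal{F}^\theta_t\}$-martingale; since the TD-error is bounded and $\nabla_{\theta_i}\log\pi_{\theta_i}$ has bounded norm, $\sum_t \|M^i_t - M^i_{t-1}\|^2 = \sum_t \|\alpha^\theta_t \xi^i_{t,1}\|^2 < \infty$ almost surely, so the martingale convergence theorem makes $\{M^i_t\}$ converge a.s., which gives the required $\lim_t \mathbb{P}(\sup_{n\ge t}\|\sum_{\tau=t}^n \alpha^\theta_\tau \xi^i_{\tau,1}\| \ge \epsilon) = 0$ — exactly the argument already used in the proof of Theorem \ref{ThmCriticConv}. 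Condition (iii) is where the timescale separation enters: because $\lim_t \alpha^\theta_t/\alpha^\omega_t = 0$, the critic evolves on the faster clock, so by Theorem \ref{ThmCriticConv} the iterate $\omega_{i,t}$ tracks $\omega_i(\bm{\theta}_t)$ with asymptotically vanishing error; combining this with Lipschitz continuity of $\omega_i \mapsto \tilde{\delta}^i_{t,\omega_i}$ (which follows from the assumed boundedness of the TD-error and gradients) and with boundedness of $\nabla_{\theta_i}\log\pi_{\theta_i}$ shows that $\xi^i_{t,2}$ is a.s. bounded and $\xi^i_{t,2} \to 0$ a.s.

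To conclude, the projection $\Gamma_i$ keeps $\theta_i$ in the compact set $\Theta_i$, so the ODE $\dot{\theta}_i = \hat{\Gamma}_i(h_i(\theta_i))$ lives in a compact set and its set of asymptotically stable equilibria is compact; Lemma \ref{KushnerClarkLemma} then yields almost-sure convergence of $\theta_i$ to that set. (For the look-ahead case, the extra $\phi_i(s_t,a^i_t,a^{-i}_t)$ inside $\tilde{\delta}^i$ is precisely the correction identified via Eqn. (\ref{RelnGG'}) that keeps $h_i$ aligned with the true policy gradient in $\mathcal{G}$; for the look-back case no such term is needed, as noted via Proposition 3 of \cite{xiao2019potential}.) I expect the main obstacle to be the rigorous handling of $\xi^i_{t,2}$: one must argue that, although $\bm{\theta}_t$ is itself drifting, it drifts slowly enough — again because $\alpha^\theta_t/\alpha^\omega_t \to 0$ — that the faster critic recursion stays uniformly close to the moving target $\omega_i(\bm{\theta}_t)$. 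This quasi-static approximation is the standard but technically heaviest ingredient of two-timescale stochastic approximation arguments and would be justified by appealing to \cite{borkar2009stochastic}.
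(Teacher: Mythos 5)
Your proposal is correct and follows essentially the same route as the paper's proof: the identical decomposition of the update into the mean field $h_i(\theta_i)$, the martingale-difference noise $\xi^i_{t,1}$, and the critic-tracking bias $\xi^i_{t,2}$, followed by verification of the three conditions of Lemma~\ref{KushnerClarkLemma} via the martingale convergence theorem and the almost-sure convergence of the critic from Theorem~\ref{ThmCriticConv}. Your added remarks on the quasi-static two-timescale argument and the role of the $\phi_i$ correction are consistent elaborations of what the paper states more tersely.
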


\begin{proof}
	Let $\xi^i_{t,1} := \tilde{\delta}^i_t \nabla_{\theta_i} \log \pi_{\theta_i}(a_t^i|o^i_t) - \mathbb{E}_{\bm{\pi_\theta}}[\tilde{\delta}^i_t \nabla_{\theta_i} \log\pi_{\theta_i}(a_t^i|o^i_t) | \mathcal{F}^\theta_t]$ and $\xi^i_{t,2}:= \mathbb{E}_{\bm{\pi_\theta}}[(\tilde{\delta}^i_t - \tilde{\delta}^i_{t,\omega_i(\bm{\theta})})\nabla_{\theta_i} \log \pi_{\theta_i}(a_t^i|o^i_t) | \mathcal{F}^\theta_t]$. 
%
	Then, the update of $\theta_i$ can be written as $\theta_i \leftarrow \theta_i + \alpha^\theta_t [h_i(\theta_i) +\xi^i_{t,1} + \xi^i_{t,2}]$, where 
%
	$h_i(\theta_i)$ is continuous in $\theta_i$. We now need to verify that the conditions in Lemma \ref{KushnerClarkLemma} are satisfied. 
	
	Since the critic parameters converge almost surely to a fixed point, $\tilde{\delta}^i_t - \tilde{\delta}^i_{t,\omega_i(\bm{\theta})} \rightarrow 0$ almost surely. Therefore, $\xi^i_{t,2} \rightarrow 0$ almost surely, verifying 
	Condition iii) in Lemma \ref{KushnerClarkLemma}. 
	
	Since $\tilde{\delta}^i_t$ and $\nabla_{\theta_i} \log \pi_{\theta_i}(a_t^i|o^i_t)$ are bounded, $\xi^i_{t,1}$ is continuous in $\theta_i$ and $\theta_i$ belongs to a compact set, the sequence $\{\xi^i_{t,1}\}$ is bounded almost surely \cite{rudin1964principles}. 
	If $M^i_t:= \sum_{\tau = 0}^t \alpha^\theta_{\tau}  \xi^i_{\tau,1}$, then $\{M^i_t\}$ is a martingale, and $\sum_t ||M^i_t-M^i_{t-1}||^2 = \sum_t ||\alpha^\theta_t \xi^i_{t,1}||^2 < \infty$ almost surely. 
	Then, 
	$\{M^i_t\}$ converges almost surely \cite{williams1991probability}, satisfying  
	Condition ii) of Lemma \ref{KushnerClarkLemma}. 
	Condition i) is true by assumption, completing the proof.
\end{proof} 
Theorems \ref{ThmCriticConv} and \ref{ThmActorConv} demonstrate the convergence of critic and actor parameters in the stochastic game with the shaped reward, $\mathcal{G}'$. 
However, our objective is to provide a guarantee of convergence in the original game $\mathcal{G}$. 
We establish such a guarantee when parameterizations of the value function results in small errors, and policy gradients in $\mathcal{G}'$ are bounded. 

\begin{df}
	For a probability measure $\mu$ on a finite set $\mathcal{M}$, the $\ell_2-$norm of a function $f$ with respect to $\mu$ is defined as $||f||_{\mu}:=\bigg[\int_{\mathcal{M}} |f(X)|^2 d\mu (X)\bigg]^{\frac{1}{2}}=\bigg[\mathbb{E}_{\mu}(|f(X)|^2)\bigg]^{\frac{1}{2}}$. 
\end{df}
\begin{thm}\label{BoundInOrigGame}
	In the stochastic game $\mathcal{G}'$, let  $(\gamma+1)||V_i^{\pi_{\bm{\theta}}}(s) - V_{\omega_i(\bm{\theta})}(s)||_{\pi_{\bm{\theta}}} \leq \mathcal{E}_i(\bm{\theta})$, and let $||\nabla_{\theta_i} \log\pi_{\theta_i}||_{\pi_{\bm{\theta}}} \leq C_i(\bm{\theta})$. 
	Let $(\bm{\theta}^*, \omega(\bm{\theta})^*)$ be the set of limit points of SAM. 
	
	Then, in the original stochastic game $\mathcal{G}$, for each agent $i$, $||\nabla_{\theta_i}J_i(\bm{\theta}^*)||_2 \leq C_i(\bm{\theta}^*)\mathcal{E}_i(\bm{\theta}^*)$.
\end{thm}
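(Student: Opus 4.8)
The plan is to compare, at a limit point $(\bm{\theta}^*,\omega(\bm{\theta})^*)$ of SAM, the right-hand side $h_i(\theta_i)$ of the actor ODE in Theorem \ref{ThmActorConv} (which lives in the shaped game $\mathcal{G}'$) with the \emph{true} policy gradient $\nabla_{\theta_i}J_i(\bm{\theta})$ in the original game $\mathcal{G}$. The goal is to show that $h_i(\theta_i)=\nabla_{\theta_i}J_i(\bm{\theta})$ plus a bias term that is governed entirely by the critic's value-approximation error $e_i(s):=V_i^{\pi_{\bm{\theta}}}(s)-V_{\omega_i(\bm{\theta})}(s)$ (the value here being that of $\mathcal{G}'$). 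Since, by Theorem \ref{ThmActorConv}, a limit point of SAM is an equilibrium of the projected actor ODE, and by Assumption 3 we may take this equilibrium to be an interior point of $\Theta_i$ where the projection $\hat{\Gamma}_i$ is inactive, we get $h_i(\theta_i^*)=0$. Rearranging then expresses $\nabla_{\theta_i}J_i(\bm{\theta}^*)$ purely as the critic-error bias, which the hypotheses bound by $C_i(\bm{\theta}^*)\mathcal{E}_i(\bm{\theta}^*)$.

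The core step is the algebraic identity for $h_i$. I would treat look-ahead advice in detail; the look-back case is analogous and additionally invokes Proposition \ref{PBAprop} (in its decentralized form) to discard the $\gamma^{-1}\phi_i(s_{t-1},a^i_{t-1},a^{-i}_{t-1})$ term in expectation against the score function. Substituting $F^i_t=\gamma\phi_i(s_{t+1},a^i_{t+1},a^{-i}_{t+1})-\phi_i(s_t,a^i_t,a^{-i}_t)$ into $\tilde{\delta}^i_{t,\omega_i(\bm{\theta})}=\delta^i_{t,\omega_i(\bm{\theta})}+\phi_i(s_t,a^i_t,a^{-i}_t)$ cancels the $\phi_i(s_t,\cdot)$ terms, leaving $\tilde{\delta}^i_{t,\omega_i(\bm{\theta})}=r^i_t+\gamma[\phi_i(s_{t+1},a^i_{t+1},a^{-i}_{t+1})+V_{\omega_i(\bm{\theta})}(s_{t+1})]-V_{\omega_i(\bm{\theta})}(s_t)$. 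Taking the conditional expectation over $(s_{t+1},a_{t+1})$ given $(s_t,a_t)$, writing $V_{\omega_i(\bm{\theta})}=V_i^{\pi_{\bm{\theta}}}-e_i$, and using relation (\ref{RelnGG'}) together with the Bellman equation for $Q_i$ in $\mathcal{G}$, one obtains $\mathbb{E}[\tilde{\delta}^i_{t,\omega_i(\bm{\theta})}\mid s_t,a_t]=[Q_i^{\bm{\pi_\theta}}(s_t,a^i_t,a^{-i}_t)]_{\mathcal{G}}-g_i(s_t)+e_i(s_t)-\gamma\,\mathbb{E}[e_i(s_{t+1})\mid s_t,a_t]$ for some function $g_i$ of $s_t$ alone. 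Multiplying by $\nabla_{\theta_i}\log\pi_{\theta_i}(a^i_t|o^i_t)$ and taking the outer expectation: the $g_i(s_t)$ term vanishes as a baseline (its inner expectation over $a^i_t$ against the score is zero), the $Q$-term reproduces $\nabla_{\theta_i}J_i(\bm{\theta})$ by the multi-agent policy gradient theorem, and we are left with $h_i(\theta_i)=\nabla_{\theta_i}J_i(\bm{\theta})+\mathbb{E}_{\bm{\pi_\theta}}\big[(e_i(s_t)-\gamma\,\mathbb{E}[e_i(s_{t+1})\mid s_t,a_t])\,\nabla_{\theta_i}\log\pi_{\theta_i}(a^i_t|o^i_t)\big]$.

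Evaluating at the limit point with $h_i(\theta_i^*)=0$ gives $\nabla_{\theta_i}J_i(\bm{\theta}^*)=-\mathbb{E}_{\bm{\pi_{\theta^*}}}\big[(e_i(s_t)-\gamma\,\mathbb{E}[e_i(s_{t+1})\mid s_t,a_t])\,\nabla_{\theta_i}\log\pi_{\theta_i}(a^i_t|o^i_t)\big]$. Then $\|\nabla_{\theta_i}J_i(\bm{\theta}^*)\|_2\le\mathbb{E}_{\bm{\pi_{\theta^*}}}\big[\,|e_i(s_t)-\gamma\,\mathbb{E}[e_i(s_{t+1})\mid s_t,a_t]|\,\cdot\|\nabla_{\theta_i}\log\pi_{\theta_i}(a^i_t|o^i_t)\|_2\big]$ by $\|\mathbb{E}[X]\|_2\le\mathbb{E}[\|X\|_2]$, and Cauchy--Schwarz in $L^2(\pi_{\bm{\theta}^*})$ bounds this by $\|e_i(s_t)-\gamma\,\mathbb{E}[e_i(s_{t+1})\mid s_t,a_t]\|_{\pi_{\bm{\theta}^*}}\cdot\|\nabla_{\theta_i}\log\pi_{\theta_i}\|_{\pi_{\bm{\theta}^*}}$. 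The second factor is $\le C_i(\bm{\theta}^*)$ by hypothesis. For the first, the triangle inequality in $L^2(\pi_{\bm{\theta}^*})$, Jensen on the conditional expectation, and stationarity of the occupancy measure $\pi_{\bm{\theta}^*}$ (Assumption 2, so that $s_{t+1}$ is again distributed as $\pi_{\bm{\theta}^*}$) give $\|e_i(s_t)-\gamma\,\mathbb{E}[e_i(s_{t+1})\mid s_t,a_t]\|_{\pi_{\bm{\theta}^*}}\le(1+\gamma)\|e_i\|_{\pi_{\bm{\theta}^*}}=(\gamma+1)\|V_i^{\pi_{\bm{\theta}^*}}-V_{\omega_i(\bm{\theta}^*)}\|_{\pi_{\bm{\theta}^*}}\le\mathcal{E}_i(\bm{\theta}^*)$. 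Combining the two factors yields the claim.

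The main obstacle is the bookkeeping in the second paragraph: one must see precisely that the look-ahead correction $\phi_i(s_t,a^i_t,a^{-i}_t)$ added to $\tilde{\delta}^i_t$ exactly cancels the $-\phi_i(s_t,\cdot)$ buried in $F^i_t$, while the leftover $\gamma\phi_i(s_{t+1},\cdot)$ recombines with $\gamma V_{\omega_i(\bm{\theta})}(s_{t+1})$ so that, after invoking (\ref{RelnGG'}) and the Bellman equation, the $Q$-value that surfaces is the one in $\mathcal{G}$ and the only residual is the critic error $e_i$; this is exactly the place where the design choice of SAM earns its keep. Secondary subtleties are justifying $h_i(\theta_i^*)=0$ (rather than merely $\hat{\Gamma}_i(h_i(\theta_i^*))=0$) via the interiority assumption on $\Theta_i$, and ensuring the occupancy measure used by the policy gradient theorem is the same $\pi_{\bm{\theta}}$ appearing in the $\ell_2$-norm hypotheses so that the Cauchy--Schwarz and stationarity steps are legitimate.
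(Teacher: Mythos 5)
Your proposal is correct and follows essentially the same route as the paper's proof: both identify $\nabla_{\theta_i}J_i(\bm{\theta}^*)$ with the difference between the true policy gradient in $\mathcal{G}$ and the vanishing equilibrium drift $h_i(\theta_i^*)$ of the actor ODE, observe that the potential terms cancel via relation (\ref{RelnGG'}) so that only the critic error $\gamma\,e_i(s_{t+1})-e_i(s_t)$ survives against the score function, and then apply Cauchy--Schwarz to obtain the bound $C_i(\bm{\theta}^*)\mathcal{E}_i(\bm{\theta}^*)$. If anything, your version is slightly more careful than the paper's at two points it leaves implicit: the interiority of the limit point so that $h_i(\theta_i^*)=0$ rather than merely $\hat{\Gamma}_i(h_i(\theta_i^*))=0$, and the Jensen-plus-stationarity step justifying the factor $(\gamma+1)\lVert e_i\rVert_{\pi_{\bm{\theta}^*}}$ when $e_i$ is evaluated at both $s_t$ and $s_{t+1}$.
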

\begin{proof}
	Let $\Theta_{i_{eq}}$ denote the set of asymptotically stable equilibria of the ODE in $\theta_i$. 
	Let $\Theta_{eq}:=\Theta_{1_{eq}} \times \dots \times \Theta_{n_{eq}}$. 
	Then, in the set $\Theta_{eq}$, $\dot{\theta}_i = 0$ for each agent $i$. 
	
	Consider a policy $\pi_{\bm{\theta}}$, $\bm{\theta} \in \Theta_{eq}$. 
	In the original game $\mathcal{G}$, 
	\begin{align}
	\nabla_{\theta_i}J_i(\bm{\theta}) &= \mathbb{E}_{\bm{\pi_\theta}} [ \nabla_{\theta_i} \log\pi_{\theta_i} (a^i_t|o^i_t)Q^{\bm{\pi_\theta}}_i(s_t,a^i_t,a^{-i}_t)]\label{MAPolGrad}
	\end{align}
	
	From Equation (\ref{RelnGG'}), $[Q^{\bm{\pi_\theta}}_i(s_t,a^i_t,a^{-i}_t)]_{\mathcal{G}}=[Q^{\bm{\pi_\theta}}_i(s_t,a^i_t,a^{-i}_t)]_{\mathcal{G}'}$ $+\phi_i(s_t,a^i_t,a^{-i}_t)$. 
	Since we use an advantage actor critic, we replace $[Q^{\bm{\pi_\theta}}_i(s_t,a^i_t,a^{-i}_t)]_{\mathcal{G}'}$ with an advantage term, defined as $[Q^{\bm{\pi_\theta}}_i(s_t,a^i_t,a^{-i}_t)]_{\mathcal{G}'}-V^{\bm{\pi_\theta}}_i(s_t)$. 
	Substituting these quantities in Equation (\ref{MAPolGrad}), 
	\begin{align}
	\nabla_{\theta_i}J_i(\bm{\theta}) &= \mathbb{E}_{\bm{\pi_\theta}}[\nabla_{\theta_i} \log\pi_{\theta_i} (a^i_t|o^i_t).\label{MAPolGradG'}\\&\qquad \qquad (r^i_t+F^i_t+\gamma V_i^{\pi_{\bm{\theta}}}(s_{t+1})\nonumber\\&\qquad \qquad-V^{\bm{\pi_\theta}}_i(s_t)+ \phi_i(s_t,a^i_t,a^{-i}_t))]\nonumber
	\end{align}
	
	At equilibrium,  $\dot{\theta}_i= 0$ in Equation (\ref{ActorODE}). 
	Subtracting this from Equation (\ref{MAPolGradG'}), 
	\begin{align}
	&\nabla_{\theta_i}J_i(\bm{\theta}) - \dot{\theta}_i =\nabla_{\theta_i}J_i(\bm{\theta}) \nonumber\\
	&=\mathbb{E}_{\bm{\pi_\theta}}[\nabla_{\theta_i} \log\pi_{\theta_i} (a^i_t|o^i_t). \nonumber \\&\qquad \qquad(\gamma (V_i^{\pi_{\bm{\theta}}}(s_{t+1}) - V_{\omega_i(\bm{\theta})} (s_{t+1}))\nonumber\\&\qquad \qquad- (V_i^{\pi_{\bm{\theta}}}(s_{t}) - V_{\omega_i(\bm{\theta})}(s_t)))] \nonumber
	\end{align} 
	
	Using the Cauchy-Schwarz inequality, 
	\begin{align}
	||\nabla_{\theta_i}J_i(\bm{\theta}^*)||_2 &\leq |\gamma +1| .||V_i^{\pi_{\bm{\theta}}}(s) - V_{\omega_i(\bm{\theta})}(s)||_{\pi_{\bm{\theta}}}.\nonumber\\&\qquad \qquad \qquad ||\nabla_{\theta_i} \log\pi_{\theta_i}||_{\pi_{\bm{\theta}}}\nonumber\\
	&\leq C_i(\bm{\theta}^*)\mathcal{E}_i(\bm{\theta}^*)\label{CauchSchwIneq}
	\end{align}

	Each term on the right side of Eqn. (\ref{CauchSchwIneq}) is bounded. Thus, $J_i(\bm{\theta})$ converges for each agent $i$ in the original game $\mathcal{G}$, even though policies are synthesized in the modified game $\mathcal{G}'$. 
\end{proof}

Proposition \ref{BoundInOrigGame} demonstrates that the additional reward $F^i$ provided by SAM to guide the agents does not distract them from accomplishing the task objective that is originally specified by the environment reward $R^i$.

Given similar assumptions, we can obtain Corollary \ref{convergence} for single-agent cases and show guarantees on the convergence of Algorithm \ref{Algo1} using the theory of `two time-scale stochastic analysis'.
Corollary \ref{convergence} gives a bound on the error introduced as a result of approximating the value function $V_{M'}$ with $V_{M'}^{\omega}$ in the MDP $M'$.
This error term is small if the linear function family for $V_{M'}^{\omega}$ is rich. 
In fact, if the critic is updated in batches, a tighter bound can be achieved, as shown in Proposition 1 of \cite{Yang2018finite}. 

\begin{cor}\label{convergence} 
	Let $\mathcal{E}(\theta):=\norm{V^{\omega(\theta)}_{M'}(s)-V^{\pi_{\theta}}_{M'}(s)}_{\rho^{\pi_{\theta}}}$.
	Then, for any limit point $(\theta^*,\omega^*) :=\lim\limits_{T_{max} \to \infty}(\theta_{T_{max}},\omega_{T_{max}})\}$ of Algorithm \ref{Algo1}, 
	$\norm{\nabla_{\theta}J_M(\theta^*)}_2\leq C\mathcal{E}(\theta^*)$.
\end{cor}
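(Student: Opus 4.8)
The plan is to prove Corollary~\ref{convergence} as the single-agent ($n=1$) specialization of Theorem~\ref{BoundInOrigGame}, reusing the two time-scale stochastic approximation argument behind Theorems~\ref{ThmCriticConv} and~\ref{ThmActorConv}. First I would run the critic analysis for a fixed policy $\pi_\theta$: under the linear parameterization of $V$ and the step-size and boundedness conditions assumed for Algorithm~\ref{Algo1}, the update $\omega \leftarrow \omega - \alpha^\omega_t \delta_t \nabla_\omega V_\omega(s_t)$ is a linear recursion whose martingale noise term is square-summable and converges almost surely (martingale convergence theorem), so by Lemma~\ref{KushnerClarkLemma} it tracks the ODE $\dot\omega = \mathbb{E}_{\pi_\theta}[\delta_t\nabla_\omega V_\omega(s_t)\,|\,\mathcal{F}^\omega_t]$, which has a unique, globally asymptotically stable equilibrium $\omega(\theta)$ because its dynamics are governed by a matrix of the form $(\gamma T_\pi - I)$, exactly as in Theorem~\ref{ThmCriticConv}. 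On the slower time-scale, after the critic has equilibrated, the actor update tracks $\dot\theta = \hat\Gamma\big(\mathbb{E}_{\pi_\theta}[\tilde\delta_{t,\omega(\theta)}\nabla_\theta\log\pi_\theta(a_t|s_t)\,|\,\mathcal{F}^\theta_t]\big)$, where $\tilde\delta_{t,\omega(\theta)} = \delta^a_{t,\omega(\theta)} + \phi(s_t,a_t)$ for look-ahead advice and $\tilde\delta_{t,\omega(\theta)} = \delta^b_{t,\omega(\theta)}$ for look-back advice (the omission of the $\phi$ correction in the look-back case is legitimate by Proposition~\ref{PBAprop}). Hence any limit point $\theta^*$ of Algorithm~\ref{Algo1} satisfies the stationarity condition $\mathbb{E}_{\pi_{\theta^*}}[\tilde\delta_{t,\omega(\theta^*)}\nabla_\theta\log\pi_{\theta^*}(a_t|s_t)] = 0$.

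Next I would compare this with the true policy gradient in $M$. By the policy gradient theorem, $\nabla_\theta J_M(\theta) = \mathbb{E}_{\tau\sim\pi_\theta}[\nabla_\theta\log\pi_\theta(a_t|s_t)\,Q^{\pi_\theta}_M(s_t,a_t)]$; substituting $Q^{\pi_\theta}_M = Q^{\pi_\theta}_{M'} + \phi$ from Eqn.~(\ref{PBAq}), and using that subtracting the state-dependent baseline $V^{\pi_\theta}_{M'}(s_t)$ leaves the gradient unchanged, I would replace $Q^{\pi_\theta}_{M'}(s_t,a_t)$ by $r_t + F(s_t,a_t,s_{t+1},a_{t+1}) + \gamma V^{\pi_\theta}_{M'}(s_{t+1}) - V^{\pi_\theta}_{M'}(s_t)$. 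This exhibits $\nabla_\theta J_M(\theta)$ as precisely the expectation driving the actor ODE, but with the \emph{true} value $V^{\pi_\theta}_{M'}$ in place of the \emph{learned} critic $V^{\omega(\theta)}_{M'}$. Evaluating at $\theta^*$ and subtracting the stationarity identity (the $r_t + F$ and $\phi(s_t,a_t)$ terms cancel) leaves
\[
\nabla_\theta J_M(\theta^*) = \mathbb{E}_{\pi_{\theta^*}}\big[\nabla_\theta\log\pi_{\theta^*}(a_t|s_t)\,(\gamma\Delta_{t+1} - \Delta_t)\big],
\]
where $\Delta_t := V^{\pi_{\theta^*}}_{M'}(s_t) - V^{\omega(\theta^*)}_{M'}(s_t)$.

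Finally I would bound the right-hand side by the Cauchy--Schwarz inequality in the $\rho^{\pi_{\theta^*}}$-weighted $\ell_2$ norm: the triangle inequality together with stationarity of $\rho^{\pi_{\theta^*}}$ (so that $s_t$ and $s_{t+1}$ share the same marginal law) gives $\norm{\nabla_\theta J_M(\theta^*)}_2 \leq (\gamma+1)\,\norm{\Delta_t}_{\rho^{\pi_{\theta^*}}}\,\norm{\nabla_\theta\log\pi_{\theta^*}}_{\rho^{\pi_{\theta^*}}} = (\gamma+1)\,\mathcal{E}(\theta^*)\,\norm{\nabla_\theta\log\pi_{\theta^*}}_{\rho^{\pi_{\theta^*}}}$, and the last factor is bounded by assumption; absorbing $(\gamma+1)$ and this bound into the constant $C$ yields $\norm{\nabla_\theta J_M(\theta^*)}_2 \leq C\,\mathcal{E}(\theta^*)$.

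The obstacle I expect is technical rather than conceptual. Algorithm~\ref{Algo1} is written in episodic, bootstrapped form and carries no explicit projection operator, so to apply Lemma~\ref{KushnerClarkLemma} one must either insert a projection $\Gamma$ onto a compact set containing a stationary point of $\nabla_\theta J_M$ (as done in the multi-agent assumptions) or separately argue that the actor iterates remain bounded, and one must reconcile the terminal-state reset with the infinite-horizon, stationary-distribution setting used to define the norms. The algebraic core, i.e., the middle two paragraphs, is a direct transcription of the proof of Theorem~\ref{BoundInOrigGame} with $n=1$ combined with Proposition~3 of~\cite{xiao2019potential}, so I would keep it terse and cite those results.
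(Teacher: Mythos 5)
Your proposal is correct and follows essentially the same route as the paper, which itself only asserts the corollary as the single-agent specialization of Theorem~\ref{BoundInOrigGame} obtained via the same two time-scale argument as Theorems~\ref{ThmCriticConv} and~\ref{ThmActorConv}: critic convergence to $\omega(\theta)$, actor stationarity, the $Q_M = Q_{M'}+\phi$ substitution with a baseline, cancellation against the equilibrium condition, and a Cauchy--Schwarz bound with $(\gamma+1)$ and the score-function norm absorbed into $C$. Your closing remark about needing a projection (or a separate boundedness argument) for the actor iterates in Algorithm~\ref{Algo1} is a fair observation, since the single-agent algorithm as written omits the projection that the multi-agent assumptions impose.
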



Algorithm \ref{algo:MAAC+PBA} desecribes SAM. 
The shaping advice is specified as a difference of potential functions (\emph{Line 15}), and is added to the reward received from the environment. 
We use an advantage-based actor-critic, and use the TD-error to estimate this advantage (\emph{Line 16}). This is used to update the actor and critic parameters for each agent (\emph{Lines 18-19}).
\begin{algorithm}[!h]
	\small
	\caption{SAM: Shaping Advice in Multi-agent RL}
	\label{algo:MAAC+PBA}
	\begin{algorithmic}[1]

		\REQUIRE{For each agent $i$: differentiable policy $\pi_{\theta_i}$; differentiable value function $V_{\omega_i}$; shaping advice $\phi_i(s,a^i,a^{-i})$. Maximum episode length $T_{max}$.}
		
\textbf{Initialization}: \\		
	policy and value function parameters $\theta_i, \omega_i$ for all agents $i$, learning rates $\alpha^\theta, \alpha^\omega$.
		\STATE{$T = 0$}
		\REPEAT
		\STATE{$t \leftarrow -1$; $\phi_i(s_{-1},a^i_{-1},a^{-i}_{-1}) = 0$ for all $i$}
		\STATE{Initialize information $s_0 = [o_0^1,\dots,o_0^n]$}
		\REPEAT
		\STATE{$t \leftarrow t+1$}
		\FOR{agent $i=1$ to $n$}
		\STATE{sample $a^i_t \sim \pi_{\theta_i}(\cdot|o^i_t)$}
		\ENDFOR
		\STATE{Take action $a_t = [a^1_t,\dots,a^n_t]$, observe new information $s_{t+1}$ and obtain reward $r^i_t$ for each agent. Use $a_t$ to determine $\phi_i(s_t,a_t)$ for all agents}
		\IF{$s_{t+1}$ is terminal}
		\STATE{$V_{\omega_i} (s_{t+1}) =0$}
		\ENDIF
		\FOR{agent $i=1$ to $n$}
		\STATE{compute $F^i_t$ based on equations (\ref{LAPBA}) and (\ref{LBPBA})}
		\STATE{TD-error: $\delta^i_t:=r^i_t+F^i_t+\gamma V_{\omega_i} (s_{t+1}) - V_{\omega_i} (s_t)$}
		\STATE{compute $\tilde{\delta}^i_t$ based on equations (\ref{delta})}
		\STATE{Update actor: $\theta_i \leftarrow \Gamma_i[\theta_i + \alpha_t^\theta \tilde{\delta}^i_t  \nabla_{\theta_i} \log~\pi_{\theta_i}(a_t^i|o^i_t)]$}
		\STATE{Update critic: $\omega_i \leftarrow \omega_i - \alpha_t^\omega \delta^i_t \nabla_{\omega_i} V_{\omega_i} (s_t)$}
		\ENDFOR
		\UNTIL{$s_{t+1}$ is terminal}
		\STATE{$T \leftarrow T+1$}
		\UNTIL{$T>T_{max}$}
	\end{algorithmic}
\end{algorithm}

\begin{rem}
We note that our objective is to maximize the rewards that can be obtained by agents equipped with shaping advice. 
Our algorithms are termed to converge when the values of these rewards reaches a `steady-state'. 
This is distinct from game-theoretic notions of equilibrium (e.g., Nash equilibrium), as studied in \cite{jiang2020enhanced, yang2018mean}. 
The flavor of theoretical analysis of convergence that we adopt also allows better interpretability of our experimental evaluations, reported in Section VI. 
\end{rem}

\section{Experiments}\label{Experiments}
Our experiments study the incorporation of shaping advice in both single-agent and multi-agent environments. 
The code for our experimental evaluations is available at \textcolor{blue}{\url{https://github.com/baicenxiao/Shaping-Advice}}. 

\subsection{Shaping advice for single-agent RL}
In the single-agent case, we seek to compare the performance of actor-critic without shaping advice, SAS-Uniform and SAS-NonUniform. 
For both SAS-Uniform and SAS-NonUniform, actor-critic \cite{sutton2018reinforcement} is adopted as the base RL algorithm.
We consider two setups. 
The first is a \emph{Puddle-Jump Gridworld} \cite{marom2018belief}, where state and action spaces are discrete. 
The second environment is a continuous state and action space \emph{mountain car} \cite{brockman2016openai}. 
In each experiment, we compare the rewards received by the agent when it uses the following schemes: \emph{i)}: actor-critic with sparse rewards (Sparse); \emph{ii)}: SAS-Uniform; \emph{iii)}: SAS-NonUniform.
For SAS-NonUniform, we apply it in a look-back way since it does not require the estimation of future actions.

\subsubsection{Puddle-Jump Gridworld}
\begin{figure}
	\centering
	\includegraphics[width=1.5 in]{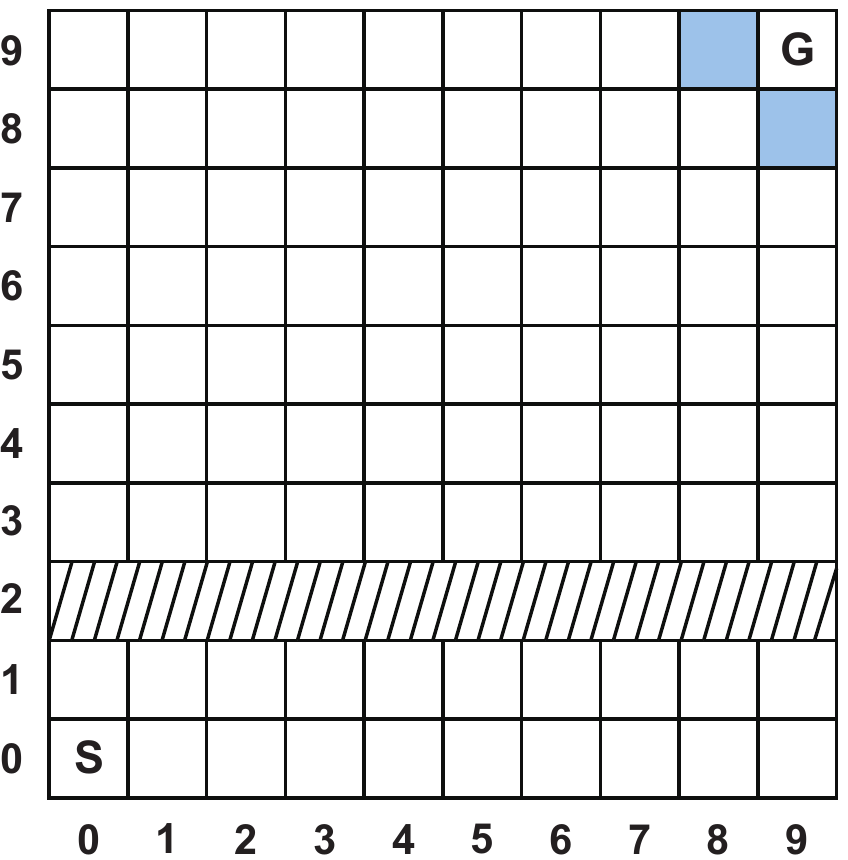}
	\caption{Schematic of the puddle-jump gridworld. The state of the agent is its position $(x,y)$. The shaded row (row $2$) represents the puddle the agent should jump over. The two blue grids denote states that are indistinguishable to the agent. The agent can choose an action from the set $\{up, down, left, right, jump\}$ at each step.}\label{GridWorld}
\end{figure}

Figure \ref{GridWorld} depicts the \emph{Puddle-jump gridworld} environment as a 10x10 grid. 
The state space is $s=(x,y)$ denoting the position of the agent in the grid, where $x,y \in \{0,1,\dots,9\}$. 
The goal of the agent is to navigate from the start state $S= (0,0)$ to the goal $G=(9,9)$. 
At each step, the agent can choose from actions in the set $A = \{up, down, left, right, jump\}$. 
There is a \emph{puddle} along row $2$ which the agent should jump over. Further, the states $(9,8)$ and $(8,9)$ (blue squares in Figure \ref{GridWorld}) are indistinguishable to the agent. 
As a result, any optimal policy for the agent will be a stochastic policy.

If the $jump$ action is chosen in rows $3$ or $1$, the agent will land on the other side of the puddle with probability $p_j$, and remain in the same state otherwise. 
This action chosen in other rows will keep the agent in its current state. 
Any action that will move the agent off the grid will keep its state unchanged. 
The agent receives a reward of $-0.05$ for each action, and $+1000$ for reaching $G$.

When using SAS-Uniform, we set $\phi^{U}(s):=u_0$ for states in rows $0$ and $1$,  and $\phi^{U}(s):=u_1$ for all other states. 
We need $u_1>u_0$ to encourage the agent to jump over the puddle. 
Unlike SAS-Uniform, SAS-NonUniform can provide the agent with more information about the actions it can take. 
We set $\phi^{NU}(s,a)$ to a `large' value if action $a$ at state $s$ results in the agent moving closer to the goal according to the $\ell_1$ norm, $\big(|G-x|+|G-y|
\big)$. 
We additionally stipulate that $\frac{1}{|A|}\sum_{a\in A}\phi^{NU}(s,a) = \phi^{U}(s)$. That is, the state potential of SAS-NonUniform is the same as that of SAS-Uniform under a uniform distribution over actions. 
This is to ensure a fair comparison between SAS-Uniform and SAS-NonUniform.

In our experiment, we set the discount factor $\gamma=1$. 
Since the dimensions of the state and action spaces is not large, we do not use a function approximator for the policy $\pi$. 
A parameter $\theta_{s,a}$ is associated to each state-action pair, and the policy is computed as:
$\pi_{\theta}(a|s)=\frac{\exp(\theta_{s,a})}{\sum_{a\in A}\exp(\theta_{s,a})}$.
We fix $\alpha^{\omega}= 0.001$, and $\alpha^{\theta}= 0.2$ for all cases. 
From Figure \ref{cliff_results}, we observe that SAS-NonUniform scheme performs the best, in that the agent converges to the goal in \textbf{five times} fewer episodes ($25$ vs. $125$ episodes) than A2C without advice (Sparse). 
When A2C is augmented with SAS-Uniform, convergence to the goal is slightly faster than without any reward shaping. 
\begin{figure}
	\centering
	\includegraphics[width=2.8 in]{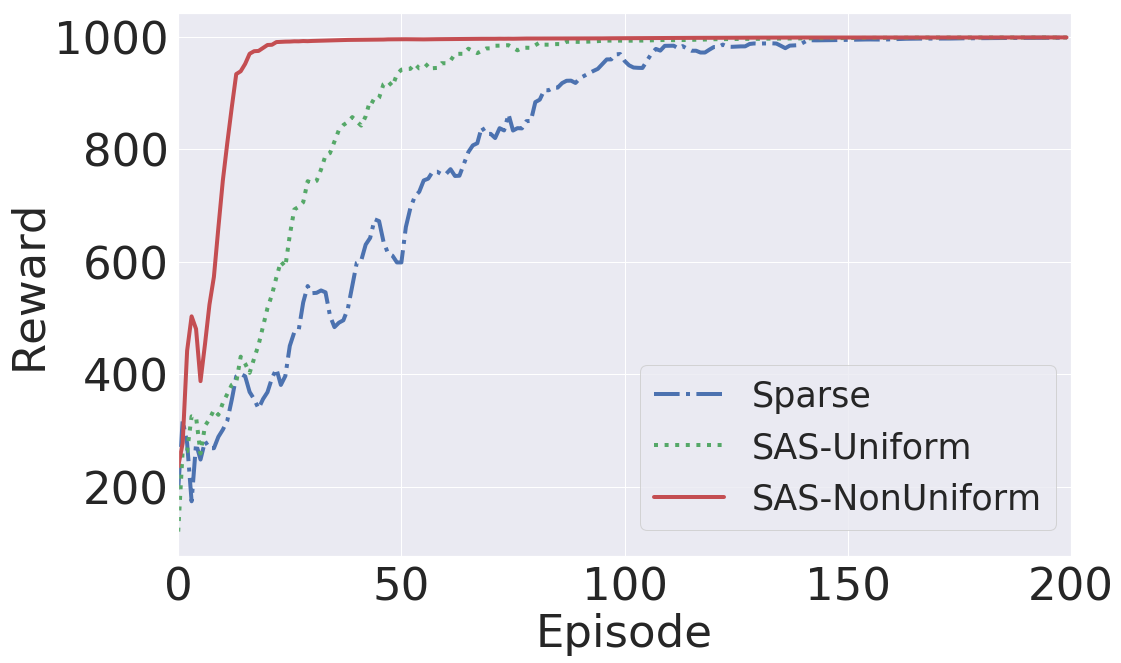}
	\caption{Average rewards in puddle-jump gridworld when jump success probability $p_j=0.2$. The baseline is the advantage actor-critic without advice.}\label{cliff_results}
\end{figure}

A smaller jump success probability $p_j$ is an indication that it is more difficult for the agent to reach the goal state $G$. 
Figure \ref{cliff_results2} shows that SAS-NonUniform results in the highest reward for a more difficult task (lower $p_j$), when compared with the other reward schemes.
\begin{figure}
	\centering
	\includegraphics[width=2.8 in]{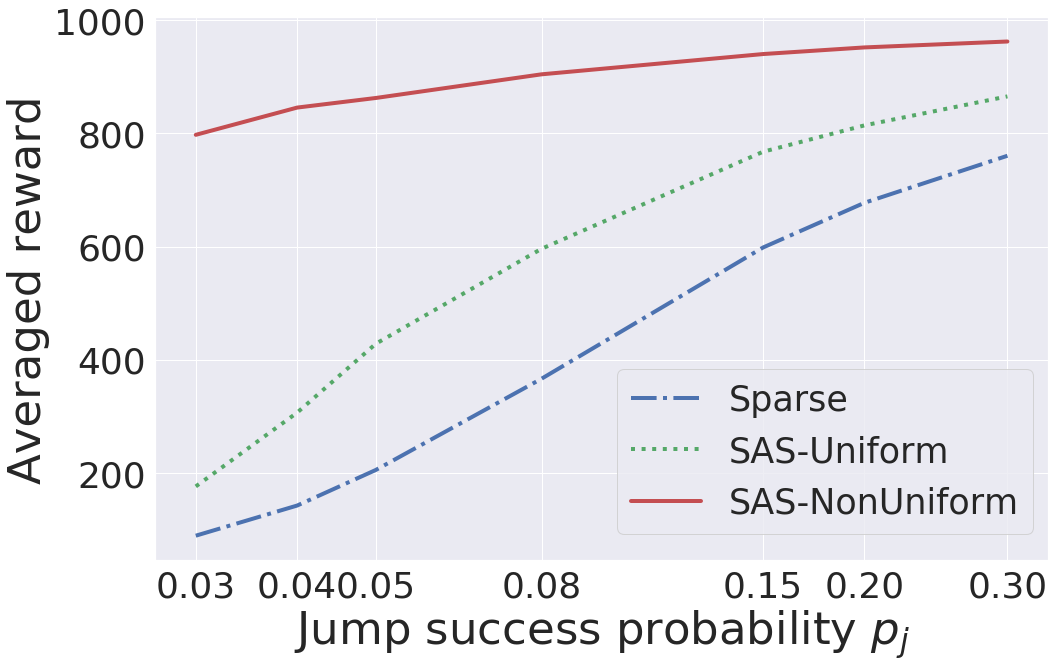}
	\caption{Average reward for the first 100 episodes with respect to the jump success probability $p_j$.}\label{cliff_results2}
\end{figure}
%
\subsubsection{Continuous Mountain Car}

In the mountain car (MC) environment, an under-powered car in a valley has to drive up a steep hill to reach the goal. In order to achieve this, the car should learn to accumulate momentum. 
A schematic for this environment is shown in Figure \ref{MountainCarFig}.
\begin{figure}
	\centering
	\includegraphics[width=3 in]{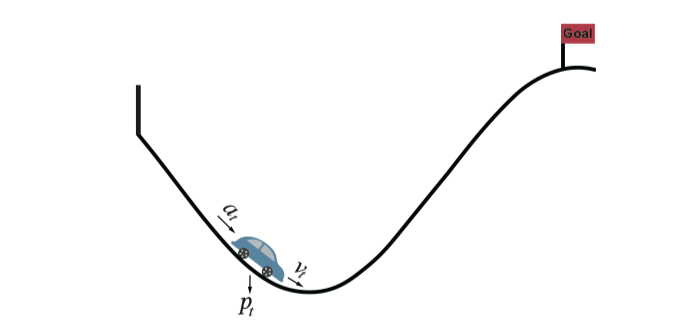}
	\caption{Schematic of the mountain-car environment. The agent's state is represented by its position $p_t$ (along the $x-$coordinate) and velocity $v_t$. The action $a_t$ is a force applied to the car. The goal is marked as a flag.}\label{MountainCarFig}
\end{figure}
This environment has continuous state and action spaces. 
The state $s=(p,v)$ denotes position $p \in [-1.2,0.6]$ and velocity $v \in [-0.07,0.07]$. 
The action $a \in [-1,+1]$. 
The continuous action space makes it difficult to use classic value-based methods, such as Q-learning and SARSA-learning. 
The reward provided by the environment depends on the action and whether the car reaches the goal. 
Specifically, once the car reaches the goal it receives $+100$, and before that, the reward at time $t$ is $-|a_t|^2$. 
This reward structure therefore discourages the waste of energy. 
This acts as a barrier for learning, because there appears to be a sub-optimal solution where the agent remains at the bottom of the valley.
Moreover, the reward for reaching the goal is significantly delayed, which makes it difficult for the conventional actor-critic algorithm to learn a good policy. 

One choice of a potential function while using SAS-Uniform in this environment is $\phi^{U}(s_t):=p_t+P$. 
Since the position $p_t$ can take values in $[-1.2,0.6]$, the offset $P$ is chosen so that the potential $\phi^{U}(s_t)$ will always be positive. 
We use $P = 2$. 
An interpretation of this scheme is: \emph{`state value is larger when the car is horizontally closer to the goal.'} 
The SAS-NonUniform scheme we use for this environment encourages the accumulation of momentum by the car-- the direction of the action is encouraged to be the same as the current direction of the car's velocity. 
In the meanwhile, we discourage inaction. 
Mathematically, the potential advice function has a larger value if \emph{ $a_t\neq 0$}. 
We let $\phi^{NU}(s_t,a_t)=1$, if $a_tv_t >0$, and $\phi^{NU}(s_t,a_t)=0$, otherwise.

In our experiments, we set $\gamma= 0.99$. 
To deal with the continuous state space, we use a neural network (NN) as a  function approximator. 
The policy distribution $\pi_{\theta}(a|s)$ is approximated by a normal distribution, the mean and variance of which are the outputs of the NN. 
The value function is also represented by an NN. 
We set $\alpha^{\theta}=1\times 10^{-5}$ and $\alpha^{\omega}=5.6\times 10^{-4}$, and use Adam \cite{Adam} to update the NN parameters. 
The results we report are averaged over 10 different environment seeds.
\begin{figure}
	\centering
	\includegraphics[width=2.6 in]{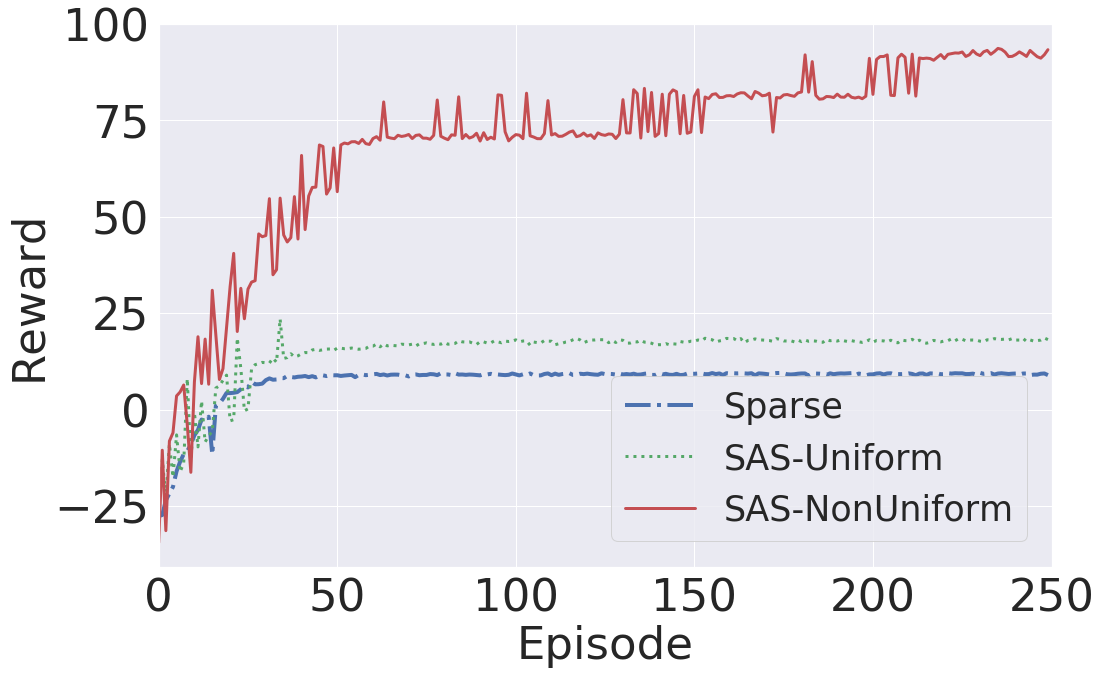}
	\caption{Average rewards for continuous mountain car problem (averaged over 10 different environment random seeds). The baseline is the A2C without advice.}\label{MountainCar_results}
\end{figure}
%

Our experiments indicate that the policy makes the agent converge to one of two points: the goal, or remain stationary at the bottom of the valley. 
We observe that when learning with the base algorithm (A2C with sparse rewards), the agent is able to reach the goal only in $10\%$ of the trials (out of 10 trials), and was stuck at the sub-optimal solution for the remaining trials. 
With SAS-Uniform, the agent could converge correctly in only $20\%$ of the trials. 
This is because the agent might have to take an action that moves it away from the goal in order to accumulate momentum. 
However, the potential function $\phi^{U}(\cdot)$ discourages such actions. 
SAS-NonUniform performs the best, where we observed that the agent was able to reach the goal in $100\%$ of the trials. 
Figure \ref{MountainCar_results} reflects these observations, where we see that SAS-NonUniform results in the agent obtaining the highest rewards. 
\begin{table*}[]
	\centering
	\begin{tabular}{|c|c|c|}
		\hline
		\textbf{Task}            & \textbf{$\phi_i(s_t,a^i_t,a^{-i}_t)$: SAM-Uniform} & \textbf{$\phi_i(s_t,a^i_t,a^{-i}_t)$: SAM-NonUniform} \\ \hline
		CN & $\alpha_{1}exp(-\beta_{1}\sum_{j=1}^N dist(s_t^j,L_j))$    & $ -M_{1}\theta_{{a^i_tL_i}}+\alpha_{2}exp(-\beta_{2}\sum_{j=1}^N dist(s_t^j,L_j))$ \\ \hline
		PD & $ \alpha_{3}exp(-\beta_{2}\sum_{j=1}^N dist(s_t^j,L_j))$    & $-M_{2}\theta_{{a^i_tL_i}}+\alpha_{4}exp(-\beta_{4}\sum_{j=1}^N dist(s_t^j,L_j))$\\ \hline
		PP & $ \alpha_{5}exp(-\beta_{5}\sum_{j=1}^N dist(s_t^{pred_j},s_t^{prey}))$   & $-M_{3}\sum_{j=1}^N \theta_{{a^{pred_j}_t s_t^{prey}}}+\alpha_{6}exp(-\beta_{6}\sum_{j=1}^N dist(s_t^{pred_j},s_t^{prey}))$ \\ \hline
	\end{tabular}\caption{Shaping advice, $F^i_t$ provided by SAM is given by Equation (\ref{LAPBA}) or (\ref{LBPBA}). The table lists the potential functions used in the Cooperative Navigation (CN), Physical Deception (PD), and Predator-Prey (PP) tasks. $L_j$ is the landmark to which agent $j$ is \emph{anchored} to. $dist(\cdot,\cdot)$ denotes the Euclidean distance. $\theta_{{a^j_tL_j}} \in [0, \pi]$ is the angle between the direction of the action taken by agent $j$ and the vector directed from its current position to $L_j$. In \emph{SAM-Uniform}, advice for every action of the agents for a particular $s_t$ is the same. 
		In \emph{SAM-NonUniform}, agents are additionally penalized if their actions are not in the direction of their target. In each case, $F^i_t$ is positive when agents take actions that move it towards their target.
	}\label{TableSAM}
\end{table*}

\subsection{Shaping Advice for Multi-agent RL}
This section describes the multi-agent tasks that we evaluate SAM on, and these include tasks with cooperative and competitive objectives. 
In each case, the rewards provided to the agents are sparse, which affects the agents' ability to obtain immediate feedback on the quality of their actions at each time-step. 
Shaping advice provided by SAM is used to guide the agents to obtain higher rewards than in the case without advice. 
We conclude the section by presenting the results of our experiments evaluating SAM on these tasks. 

\subsubsection{Task Descriptions and Shaping Advice}
\begin{figure}[!h]
	\centering
	\includegraphics[width=3.30 in]{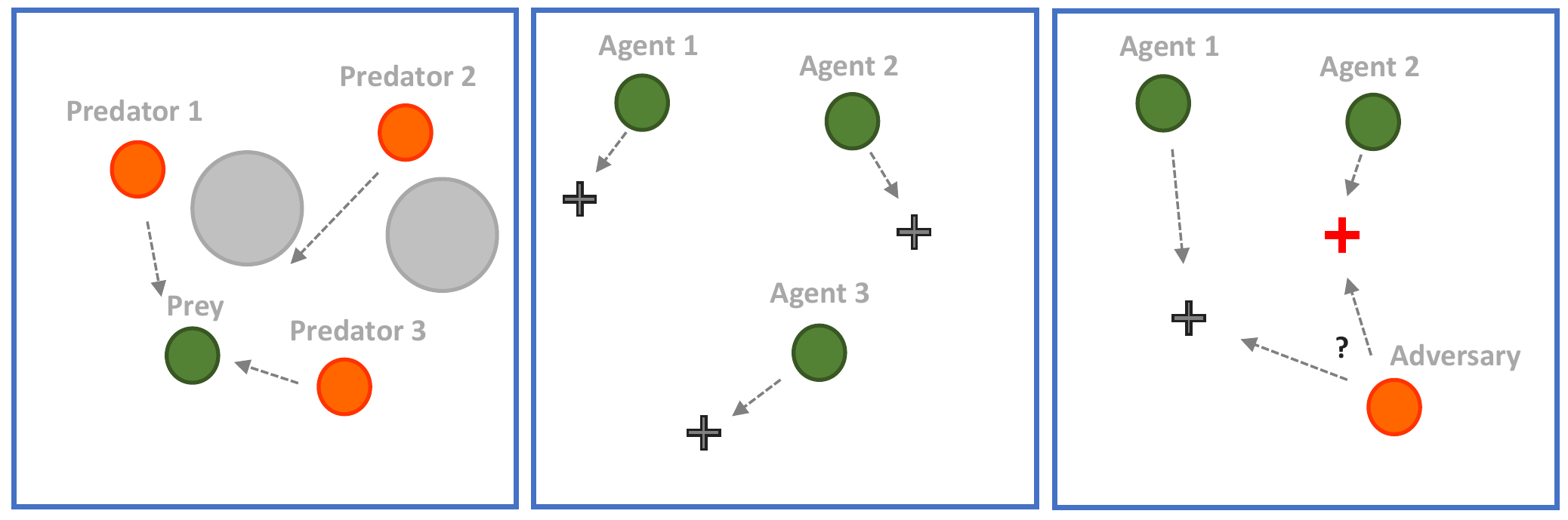} 
	\caption{Representations of tasks from the Particle World Environment \cite{lowe2017multi} that we study. (\emph{Left to Right}) Predator-Prey (PP), Cooperative Navigation (CN), and Physical Deception (PD). In PP, predators (red) seek to catch the prey (green) while avoiding obstacles (grey). 
	In CN, agents (green) each seek to navigate to a different landmark ($\times$) and are penalized for collisions with each other. In PD, one of the agents (green) must reach the true landmark (red $\times$), while preventing the adversary from reaching this landmark. In all tasks, rewards are \emph{sparse}. Agents receive a reward or penalty only when a corresponding reachability or collision criterion is satisfied. 
	}\label{PartWorldEnv}
\end{figure}

We examine three tasks from the \emph{Particle World} environment \cite{lowe2017multi} where multiple agents share a two-dimensional space with continuous states and actions. 
An illustration of the tasks is shown in Figure \ref{PartWorldEnv}, and we describe them below. 

\emph{Predator-Prey}:  
This task has $N$ predator agents who cooperate to capture $1$ faster-moving prey. 
Predators are rewarded when one of them collides with the prey, while the prey is penalized for the collision. 
The reward at other times is zero. 
Two landmarks impede movement of the agents. 
%
%

\emph{Cooperative Navigation}:
This task has $N$ agents and $N$ landmarks. 
Agents are each rewarded $r$ when an agent reaches a landmark, and penalized for collisions with each other. 
The reward at other times is zero. 
Therefore, the maximum rewards agents can obtain is $rN$.  
Thus, agents must learn to \emph{cover} the landmarks, and not collide with each other. 

\emph{Physical Deception}: 
This task has $1$ adversary, $N$ agents, and $N$ landmarks. 
Only one landmark is the true target. 
Agents are rewarded when any one reaches the target, and penalized if the adversary reaches the target. 
At all other times, the agents get a reward of zero. 
An adversary also wants to reach the target, but it does not know which landmark is the target landmark. 
Thus, agents have to learn to split up and cover the landmarks to deceive the adversary. 

In each environment, SAM provides shaping advice to guide agents to obtain a higher positive reward. 
This advice is augmented to the reward received from the environment. 
%
The advice is a heuristic given by a difference of potential functions (Equations (\ref{LAPBA}) or (\ref{LBPBA})), and only needs to be specified \emph{once} at the start of the training process. 
In the \emph{Cooperative Navigation} and \emph{Physical Deception} tasks, we \emph{anchor} each agent to a (distinct) landmark. 
The shaping advice will then depend on the distance of an agent to the landmark it is anchored to. 
Although distances computed in this manner will depend on the order in which the agents and landmarks are chosen, we observe that it \emph{empirically} works across multiple training episodes where positions of landmarks and initial positions of agents are generated randomly. 
The advice provided by SAM is positive when agents move closer to landmarks they are anchored to. 
In the absence of anchoring, they may get distracted and move towards different landmarks at different time steps.  
Anchoring results in agents learning to cover landmarks faster.

We consider two variants of advice for each task. 
In \textbf{\emph{SAM-Uniform}}, the advice for every action taken is the same. 
In \textbf{\emph{SAM-NonUniform}}, a higher weight is given to some `good' actions over others for each $s_t$. 
We enumerate the advice for each task in Table \ref{TableSAM}. 
We use MADDPG as the base RL algorithm \cite{lowe2017multi}. 
We compare the performance of agents trained with SAM (SAM-Uniform or SAM-NonUniform) to the performance of agents trained using the sparse reward from the environment. 
We also compare SAM with a state-of-the-art reward redistribution technique introduced in \cite{gangwani2020learning}  called Iterative Relative Credit Assignment (IRCR). 

\subsubsection{Implementation details}
When we tested SAM-NonUniform, we used look-back advice following Equation (\ref{LBPBA}).
This was done to avoid estimating a `future' action, i.e. $a^i_{t+1}$, at each time-step (since the replay buffer contains tuples of the form $(s_t, a^1_t,\dots,a^n_t, r^1_t,\dots,r^n_t, s_{t+1})$). 
Noisy estimates of $a^i_{t+1}$ can cause oscillations in the rewards obtained by the agent. 

We adopt the same hyperparameter values and network architectures as used in \cite{lowe2017multi}. 
We let $\Gamma_i$ \emph{Line 18} of the SAM Algorithm be the identity matrix. 
We list values of $\alpha_{\cdot}, \beta_{\cdot}, M_{\cdot}$ (from Table 1 in the main paper) that were used in our experiments.
%
\textbf{Sparse Reward Setting}: 
For results reported in Figures \ref{BarPlotScore} and \ref{FigGraphs}, we use the following parameters for the shaping advice:

\begin{center}
	\begin{tabular}{l l l l}
		
		CN ($N=3$) & $\alpha_1 = \alpha_2 = 100$ & $\beta_1 = \beta_2 =1$ & $M_1 = 1$ \\
		CN ($N=6$) & $\alpha_1 = \alpha_2 = 1000$ & $\beta_1 = \beta_2 =1$ & $M_1 = 10$ \\
		PD ($N=2$) & $\alpha_3 = \alpha_4 = 500$ & $\beta_3 = \beta_4 = 1$ & $M_2 = 1$\\
		PD ($N=4$) & $\alpha_3 = \alpha_4 = 500$ & $\beta_3 = \beta_4 = 1$ & $M_2 = 10$\\
		PP ($N=3$)& $\alpha_5 = \alpha_6 = 100$ & $\beta_5 = \beta_6 = 1$ & $M_3 = 1$\\ \\
	\end{tabular}
\end{center}

%
%

\textbf{Other forms of shaping rewards}: 
In the three tasks that we evaluated, we observed that a `linear' distance-based advice (i.e., advice of the form $\phi_i(s_t,a^i_t,a^{-i}_t) :=\sum_j dist(s_t^j,L_j)$) did not work. 
From Equations (\ref{LAPBA}) and (\ref{LBPBA}), using this form of advice, an agent would get the same additional reward when it took a step towards the target, independent of its distance to the target. 
For example, an agent $100$ steps from the target would get the same shaping advice if it took one step towards the target as an agent who takes a step towards the target from a state that is $50$ steps from the target. \\

\subsubsection{Results}

\begin{figure}[!h]
	\centering
	\includegraphics[width=3.20 in]{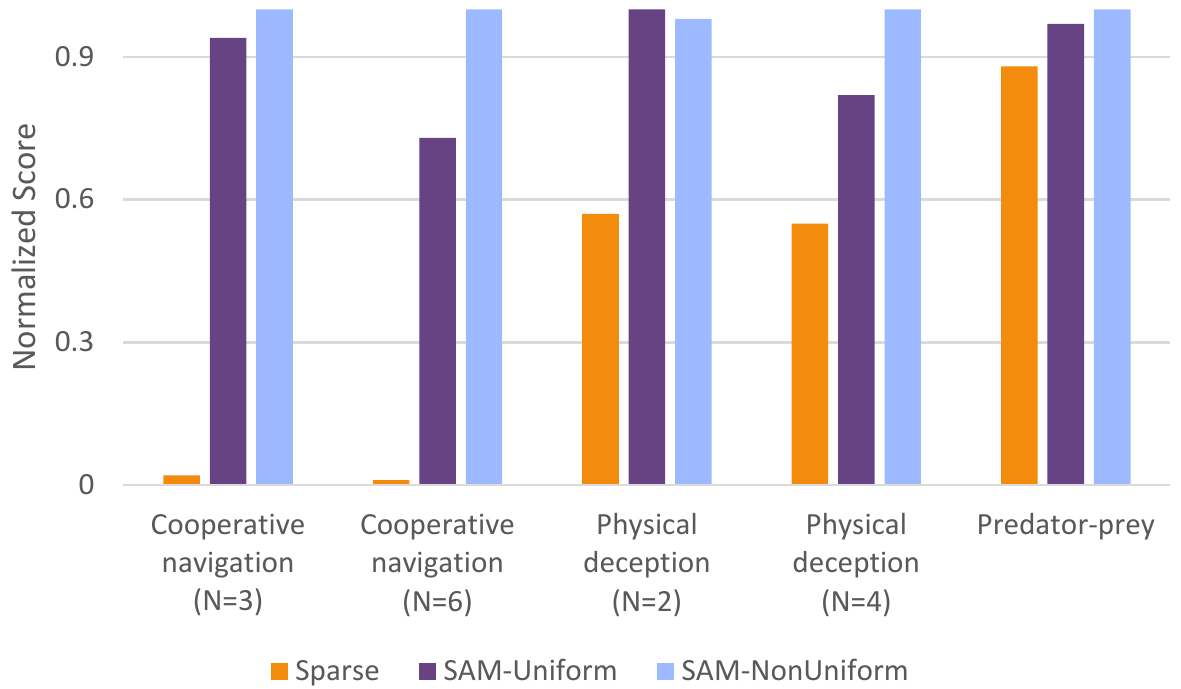} 
	\caption{Comparison between SAM (SAM-NonUniform (blue) or SAM-Uniform (purple)) augmented to MADDPG and classical MADDPG policies (orange) on cooperative and competitive tasks with sparse rewards. The \textbf{score} for a task is the average agent reward in cooperative tasks, and the \emph{average agent advantage} ($=$ agent reward $-$ adversary reward) in competitive tasks. Each bar cluster shows Normalized $0-1$ scores, averaged over the last $1000$ training episodes. Higher score is better. SAM-NonUniform outperforms SAM-Uniform and the classical MADDPG baseline by a larger margin when there are more agents in the cooperative navigation and physical deception tasks.
	The scores of IRCR are not shown here, since it performs consistently worse than other approaches in these three tasks.
	}\label{BarPlotScore}
\end{figure}

\begin{figure*}
	\begin{subfigure}{0.31\textwidth}
		\includegraphics[width=\linewidth]{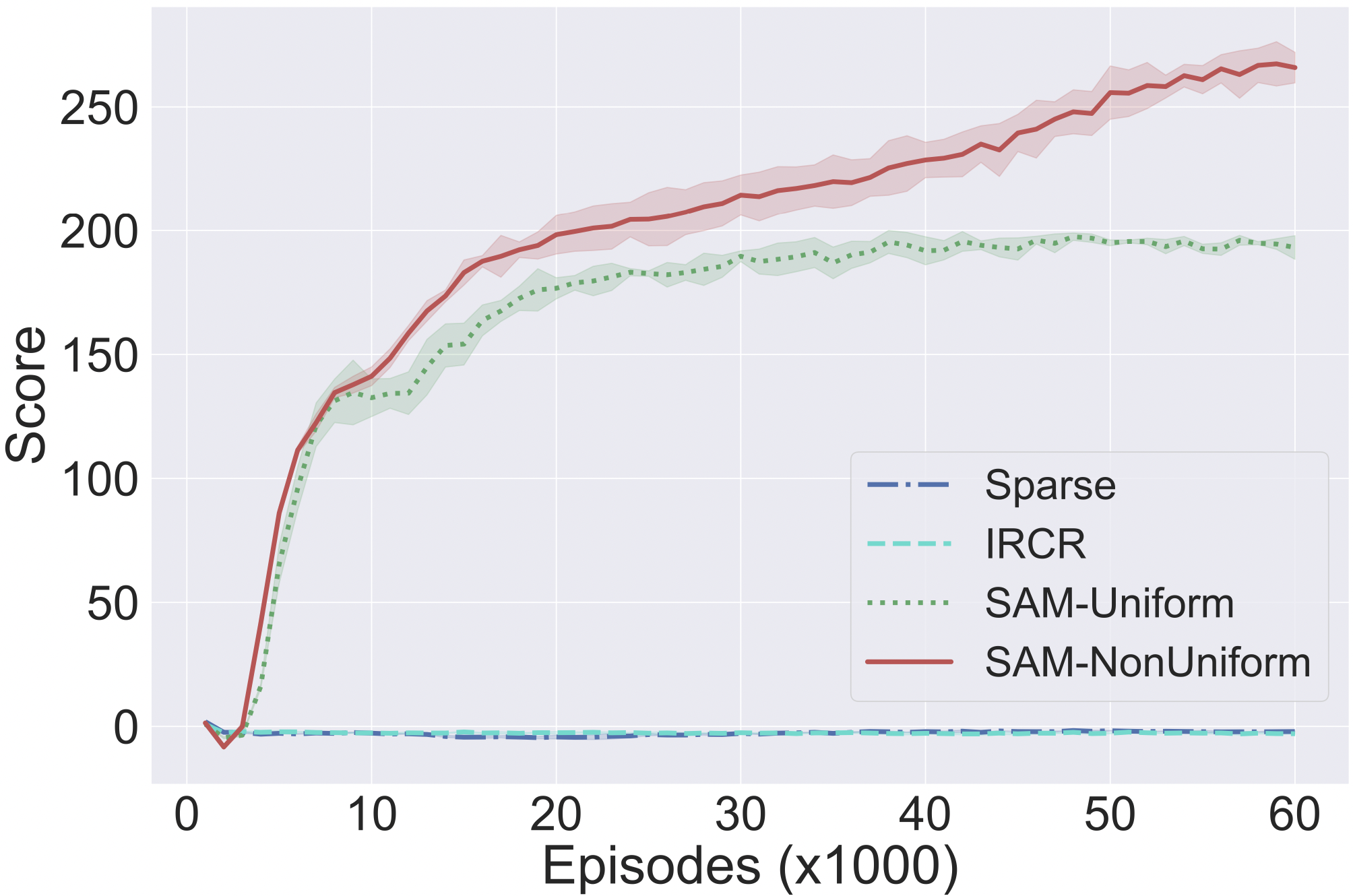}
		\caption{Cooperative Navigation ($N=6$)} \label{fig:1a}
	\end{subfigure}%
	\hspace*{\fill}   
	\begin{subfigure}{0.31\textwidth}
		\includegraphics[width=\linewidth]{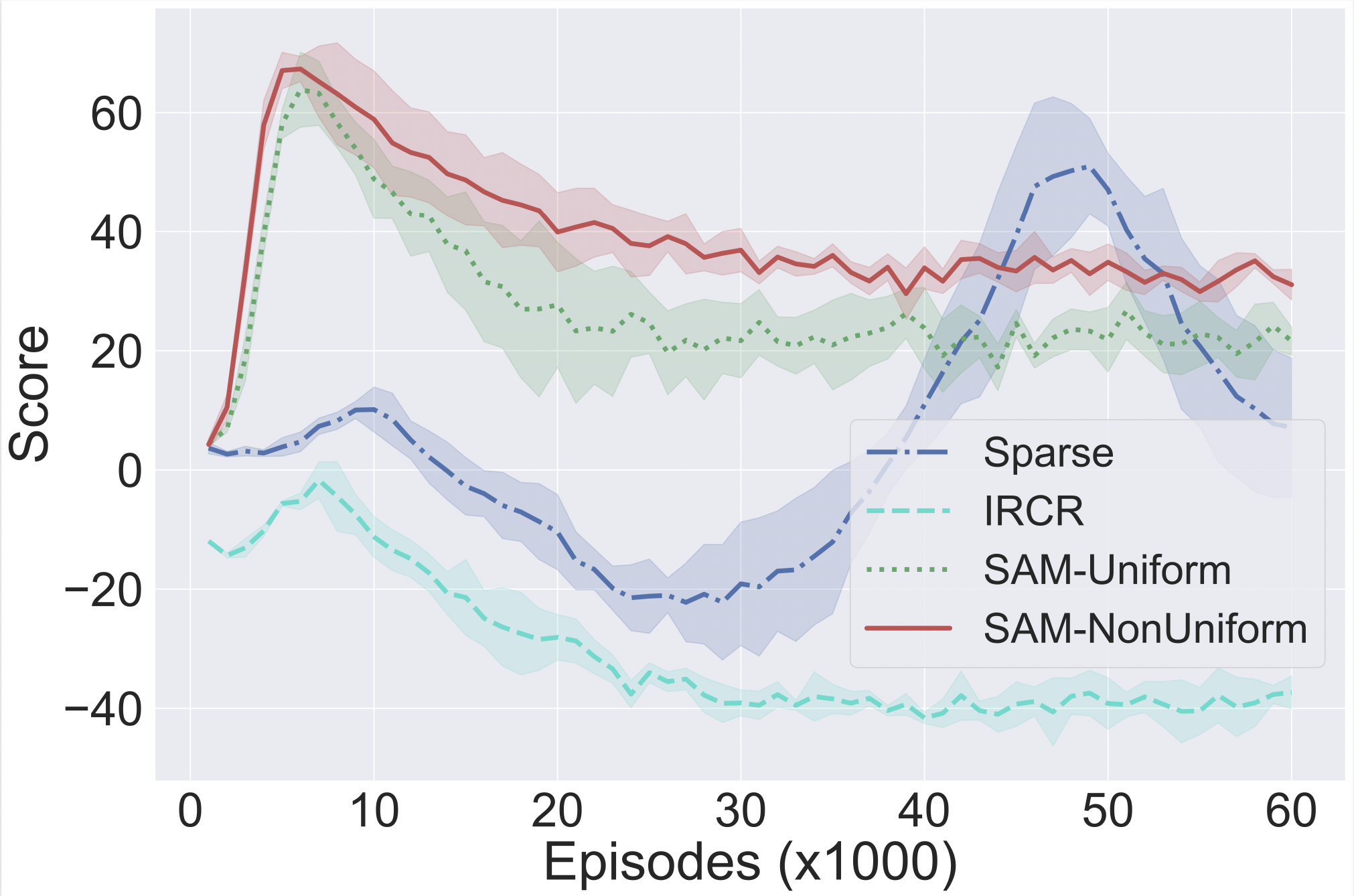}
		\caption{Physical Deception ($N=4$)} \label{fig:1b}
	\end{subfigure}%
	\hspace*{\fill}   
	\begin{subfigure}{0.31\textwidth}
		\includegraphics[width=\linewidth]{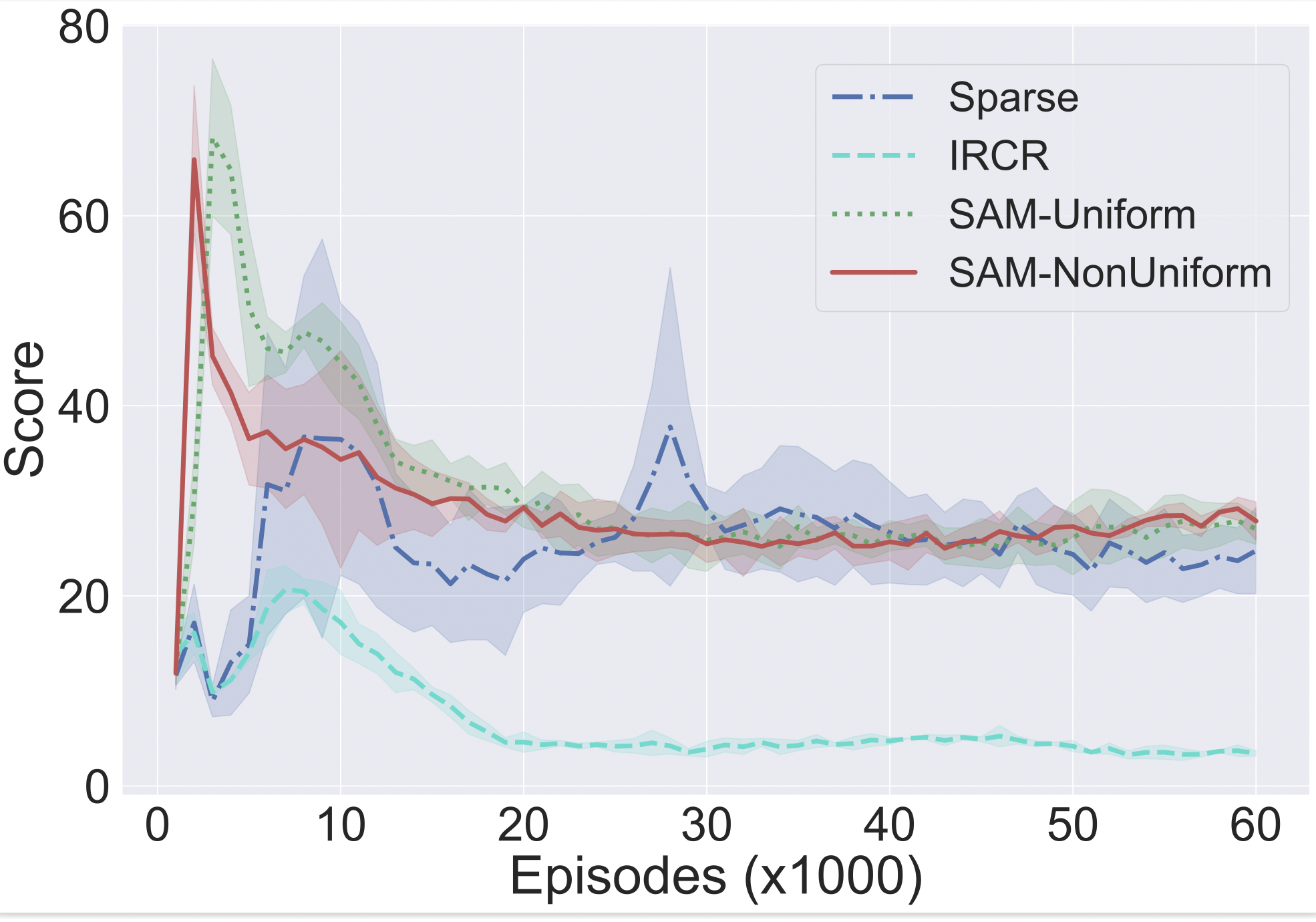}
		\caption{Predator-Prey ($3$ pred., $1$ prey.)} \label{fig:1c}
	\end{subfigure}
	
	\caption{Average and variance of scores when agents use SAM-NonUniform (blue), SAM-Uniform (purple), IRCR (green) and sparse rewards (orange). SAM-NonUniform results in the highest average scores. SAM-Uniform compares favorably, and both significantly outperform agents trained using only sparse rewards. 
		IRCR is not able to guide agents to obtain higher rewards in all three tasks.
	} \label{FigGraphs}
\end{figure*}
%

Figure \ref{BarPlotScore} shows $0-1$ normalized scores, averaged over the last 1000 training episodes, comparing SAM augmented to MADDPG and classical MADDPG policies. 
The \textbf{score} for a task is the average agent reward in cooperative tasks, and the \emph{average agent advantage} ($=$ agent $-$ adversary reward) 
in competitive tasks \cite{wen2019probabilistic}. 
Agents equipped with \emph{SAM-NonUniform} have the best performance. 
This is because SAM-NonUniform provides specific feedback on the quality of agents' actions. 
\emph{SAM-Uniform} also performs well in these tasks. 
SAM-NonUniform outperforms SAM-Uniform and the classical MADDPG baseline by a significant margin when there are more agents. 

In cooperative navigation, when rewards are sparse, the agents are not able to learn policies that will allow them to even partially cover the landmarks using the MADDPG baseline. 
In comparison, SAM guides agents to learn to adapt to each others' policies, and cover all the landmarks. 
SAM-NonUniform results in much higher rewards than other methods in the complex task with $N=6$ agents. 

We observe a similar phenomenon in physical deception, where SAM guides agents to learn policies to cover the landmarks. 
This behavior of the agents is useful in deceiving the adversary from moving towards the true landmark, thereby resulting in lower rewards for the adversary. 
Therefore, agent advantage is higher with SAM. 

In the predator-prey task, we see that the performance of SAM is comparable to MADDPG. 
We believe that this is because this task might have a well-defined and unique \emph{equilibrium} to which agent policies eventually converge. 

Figure \ref{FigGraphs} shows the average and variance of the \textbf{score} during different stages of the training process.
The score for a task is the average agent reward in cooperative tasks, and the \emph{average agent advantage} ($=$ agent $-$ adversary reward) in competitive tasks \cite{wen2019probabilistic}. 
In terms of agent scores averaged over the last $1000$ training episodes,
agents equipped with \emph{SAM-NonUniform} have the best performance. 
This is because SAM-NonUniform provides specific feedback on quality of agents' actions. 
\emph{SAM-Uniform} also performs well in these tasks. 

In cooperative navigation, when agents use only the sparse rewards from the environment, the agents are not able to learn policies that will allow them to even partially cover the landmarks. 
In comparison, SAM guides agents to learn to adapt to each others' policies, and cover all the landmarks. 
A similar phenomenon is observed in physical deception, where SAM guides agents to learn policies to cover the landmarks. 
This behavior of the agents is useful in deceiving the adversary from moving towards the true landmark, thereby resulting in lower final rewards for the adversary. 


We additionally compare the performance of SAM with a technique called IRCR that was introduced in \cite{gangwani2020learning}. 
We observe that agents using IRCR receive the lowest scores in all three tasks. 
We believe that a possible reason for this is that in each training episode, IRCR accumulates rewards till the end of the episode, and then uniformly redistributes the accumulated reward along the length of the episode. 
A consequence of this is that an agent may find it more difficult to identify the time-step when it reaches a landmark or when a collision occurs. 
For example, in the \emph{Predator-Prey} task, suppose that the length of an episode is $T_{ep}$. 
Consider a scenario where one of the predators collides with the prey at a time $T < T_{ep}$, and subsequently moves away from the prey. 
When IRCR is applied to this scenario, the redistributed reward at time $T$ will be the same as that at other time steps before $T_{ep}$. 
This property makes it difficult to identify critical time-steps when collisions between agents happen.  

The authors of \cite{lowe2017multi} observed that agent policies being unable to adapt to each other in competitive environments resulted in oscillations in rewards. 
Figure \ref{fig:1b} indicates that SAM is able to alleviate this problem. 
Policies learned by agents using SAM in the physical deception task 
result in much smaller oscillations in the rewards than when using sparse rewards alone. 

%
\section{Conclusion}\label{Conclusion}

This paper presented a comprehensive framework to incorporate domain knowledge through shaping advice in single- and multi-agent reinforcement learning environments with sparse rewards. 
The shaping advice for each agent was a heuristic specified as a difference of potential functions, and was augmented to the reward provided by the environment. 
The modified reward signal provided agents with immediate feedback on the quality of the actions taken at each time-step. 
In the single-agent case, our algorithm SAS enabled the agent to obtain higher rewards, and was able to reach a target state more successfully than without a shaping reward. 
For the multi-agent setting, SAM used the centralized training with decentralized execution paradigm to efficiently learn decentralized policies for each agent that used only their individual local observations. 
We showed through theoretical analyses and experimental validation that shaping advice provided by SAS and SAM did not distract agents from accomplishing task objectives specified by the environment reward. 
Using SAS or SAM allowed the agents to obtain a higher average reward in fewer number of training episodes. 
 
Future research will aim to extend the scope of shaping advice based techniques. 
Two research questions are: 
i) can we design principled methods based on SAS and SAM in order to adaptively learn shaping advice (rather than being fixed for the duration of training)?, ii) how might such an adaptive procedure affect the sample efficiency and number of training episodes required? 
We believe that answers to these questions will be a step towards broadening the application of shaping advice to more challenging real-world RL environments. 


\bibliographystyle{IEEEtran}
\bibliography{RewShapBib}

\begin{thebibliography}{10}
\providecommand{\url}[1]{#1}
\csname url@samestyle\endcsname
\providecommand{\newblock}{\relax}
\providecommand{\bibinfo}[2]{#2}
\providecommand{\BIBentrySTDinterwordspacing}{\spaceskip=0pt\relax}
\providecommand{\BIBentryALTinterwordstretchfactor}{4}
\providecommand{\BIBentryALTinterwordspacing}{\spaceskip=\fontdimen2\font plus
\BIBentryALTinterwordstretchfactor\fontdimen3\font minus
  \fontdimen4\font\relax}
\providecommand{\BIBforeignlanguage}[2]{{%
\expandafter\ifx\csname l@#1\endcsname\relax
\typeout{** WARNING: IEEEtran.bst: No hyphenation pattern has been}%
\typeout{** loaded for the language `#1'. Using the pattern for}%
\typeout{** the default language instead.}%
\else
\language=\csname l@#1\endcsname
\fi
#2}}
\providecommand{\BIBdecl}{\relax}
\BIBdecl

\bibitem{sutton2018reinforcement}
R.~S. Sutton and A.~G. Barto, \emph{Reinforcement Learning: An
  Introduction}.\hskip 1em plus 0.5em minus 0.4em\relax MIT press, 2018.

\bibitem{mnih2015human}
V.~Mnih \emph{et~al.}, ``Human-level control through deep reinforcement
  learning,'' \emph{Nature}, vol. 518, no. 7540, 2015.

\bibitem{silver2016mastering}
D.~Silver \emph{et~al.}, ``Mastering the game of {G}o with deep neural networks
  and tree search,'' \emph{Nature}, vol. 529, no. 7587, 2016.

\bibitem{tampuu2017multiagent}
A.~Tampuu, T.~Matiisen, D.~Kodelja, I.~Kuzovkin, K.~Korjus, J.~Aru, J.~Aru, and
  R.~Vicente, ``Multiagent cooperation and competition with deep reinforcement
  learning,'' \emph{PloS One}, vol.~12, no.~4, 2017.

\bibitem{lillicrap2016continuous}
T.~P. Lillicrap \emph{et~al.}, ``Continuous control with deep reinforcement
  learning,'' in \emph{International Conference on Learning and
  Representations}, 2016.

\bibitem{sallab2017deep}
A.~E. Sallab, M.~Abdou, E.~Perot, and S.~Yogamani, ``Deep reinforcement
  learning framework for autonomous driving,'' \emph{Electronic Imaging}, vol.
  2017, no.~19, pp. 70--76, 2017.

\bibitem{leibo2017multi}
J.~Z. Leibo, V.~Zambaldi, M.~Lanctot, J.~Marecki, and T.~Graepel, ``Multi-agent
  reinforcement learning in sequential social dilemmas,'' in \emph{Conference
  on Autonomous Agents and Multi-Agent Systems}, 2017, pp. 464--473.

\bibitem{yin2021resource}
S.~Yin and F.~R. Yu, ``Resource allocation and trajectory design in uav-aided
  cellular networks based on multi-agent reinforcement learning,'' \emph{IEEE
  Internet of Things Journal}, 2021.

\bibitem{agogino2008analyzing}
A.~K. Agogino and K.~Tumer, ``Analyzing and visualizing multiagent rewards in
  dynamic and stochastic domains,'' \emph{Autonomous Agents and Multi-Agent
  Systems}, vol.~17, no.~2, pp. 320--338, 2008.

\bibitem{devlin2011empirical}
S.~Devlin, D.~Kudenko, and M.~Grze{\'s}, ``An empirical study of
  potential-based reward shaping and advice in complex, multi-agent systems,''
  \emph{Advances in Complex Systems}, vol.~14, pp. 251--278, 2011.

\bibitem{devlin2014potential}
S.~Devlin, L.~Yliniemi, D.~Kudenko, and K.~Tumer, ``Potential-based difference
  rewards for multiagent reinforcement learning,'' in \emph{Conference on
  Autonomous Agents and Multi-Agent Systems}, 2014.

\bibitem{randlov1998learning}
J.~Randl{\o}v and P.~Alstr{\o}m, ``Learning to drive a bicycle using
  reinforcement learning and shaping.'' in \emph{International Conference on
  Machine Learning}, 1998.

\bibitem{ng1999policy}
A.~Y. Ng, D.~Harada, and S.~Russell, ``Policy invariance under reward
  transformations: {T}heory and application to reward shaping,'' in
  \emph{International Conference on Machine Learning}, 1999, pp. 278--287.

\bibitem{foerster2017stabilising}
J.~Foerster, N.~Nardelli, G.~Farquhar, T.~Afouras, P.~H. Torr, P.~Kohli, and
  S.~Whiteson, ``Stabilising experience replay for deep multi-agent
  reinforcement learning,'' in \emph{International Conference on Machine
  Learning}, 2017, pp. 1146--1155.

\bibitem{matignon2012independent}
L.~Matignon, G.~J. Laurent, and N.~Le~Fort-Piat, ``Independent reinforcement
  learners in cooperative {M}arkov games: {A} survey regarding coordination
  problems.'' \emph{The Knowledge Engineering Review}, vol.~27, no.~1, pp.
  1--31, 2012.

\bibitem{tan1993multi}
M.~Tan, ``Multi-agent reinforcement learning: {I}ndependent vs. cooperative
  agents,'' in \emph{International Conference on Machine Learning}, 1993.

\bibitem{lowe2017multi}
R.~Lowe, Y.~Wu, A.~Tamar, J.~Harb, P.~Abbeel, and I.~Mordatch, ``Multi-agent
  actor-critic for mixed cooperative-competitive environments,'' in
  \emph{Advances in Neural Information Processing Systems}, 2017, pp.
  6379--6390.

\bibitem{mannion2018reward}
P.~Mannion, S.~Devlin, J.~Duggan, and E.~Howley, ``Reward shaping for
  knowledge-based multi-objective multi-agent reinforcement learning,''
  \emph{The Knowledge Engineering Review}, vol.~33, 2018.

\bibitem{gangwani2020learning}
T.~Gangwani, Y.~Zhou, and J.~Peng, ``Learning guidance rewards with
  trajectory-space smoothing,'' in \emph{Neural Information Processing
  Systems}, 2020.

\bibitem{xiao2019potential}
B.~Xiao, B.~Ramasubramanian, A.~Clark, H.~Hajishirzi, L.~Bushnell, and
  R.~Poovendran, ``Potential-based advice for stochastic policy learning,'' in
  \emph{IEEE Conference on Decision and Control}, 2019, pp. 1842--1849.

\bibitem{taylor2011integrating}
M.~E. Taylor, H.~B. Suay, and S.~Chernova, ``Integrating reinforcement learning
  with human demonstrations of varying ability,'' in \emph{Conference on
  Autonomous Agents and Multiagent Systems}, 2011, pp. 617--624.

\bibitem{wang2017improving}
Z.~Wang and M.~E. Taylor, ``Improving reinforcement learning with
  confidence-based demonstrations.'' in \emph{International Joint Conference on
  Artificial Intelligence}, 2017, pp. 3027--3033.

\bibitem{kelly2019hg}
M.~Kelly, C.~Sidrane, K.~Driggs-Campbell, and M.~J. Kochenderfer,
  ``{HG-DA}gger: {I}nteractive imitation learning with human experts,'' in
  \emph{International Conference on Robotics and Automation}.\hskip 1em plus
  0.5em minus 0.4em\relax IEEE, 2019, pp. 8077--8083.

\bibitem{ross2011reduction}
S.~Ross, G.~Gordon, and D.~Bagnell, ``A reduction of imitation learning and
  structured prediction to no-regret online learning,'' in \emph{International
  Conference on Artificial Intelligence and Statistics}, 2011, pp. 627--635.

\bibitem{xiao2020fresh}
B.~Xiao, Q.~Lu, B.~Ramasubramanian, A.~Clark, L.~Bushnell, and R.~Poovendran,
  ``{FRESH}: {I}nteractive reward shaping in high-dimensional state spaces
  using human feedback,'' in \emph{Conference on Autonomous Agents and
  Multi-Agent Systems}, 2020, pp. 1512--1520.

\bibitem{zuo2015nonsingular}
Z.~Zuo, ``Nonsingular fixed-time consensus tracking for second-order
  multi-agent networks,'' \emph{Automatica}, vol.~54, pp. 305--309, 2015.

\bibitem{tian2018fixed}
B.~Tian, H.~Lu, Z.~Zuo, and W.~Yang, ``Fixed-time leader--follower output
  feedback consensus for second-order multiagent systems,'' \emph{IEEE
  Transactions on Cybernetics}, vol.~49, pp. 1545--1550, 2018.

\bibitem{li2020consensus}
X.~Li, Y.~Tang, and H.~R. Karimi, ``Consensus of multi-agent systems via fully
  distributed event-triggered control,'' \emph{Automatica}, vol. 116, p.
  108898, 2020.

\bibitem{qin2016recent}
J.~Qin, Q.~Ma, Y.~Shi, and L.~Wang, ``Recent advances in consensus of
  multi-agent systems: {A} brief survey,'' \emph{IEEE Transactions on
  Industrial Electronics}, vol.~64, no.~6, pp. 4972--4983, 2016.

\bibitem{gao2021reinforcement}
W.~Gao, M.~Mynuddin, D.~C. Wunsch, and Z.-P. Jiang, ``Reinforcement
  learning-based cooperative optimal output regulation via distributed adaptive
  internal model,'' \emph{IEEE transactions on neural networks and learning
  systems}, 2021.

\bibitem{vamvoudakis2017game}
K.~G. Vamvoudakis, H.~Modares, B.~Kiumarsi, and F.~L. Lewis, ``Game
  theory-based control system algorithms with real-time reinforcement learning:
  How to solve multiplayer games online,'' \emph{IEEE Control Systems
  Magazine}, vol.~37, no.~1, pp. 33--52, 2017.

\bibitem{qu2020scalable}
G.~Qu, Y.~Lin, A.~Wierman, and N.~Li, ``Scalable multi-agent reinforcement
  learning for networked systems with average reward,'' \emph{Advances in
  Neural Information Processing Systems}, vol.~33, 2020.

\bibitem{zhang2021multi}
K.~Zhang, Z.~Yang, and T.~Ba{\c{s}}ar, ``Multi-agent reinforcement learning:
  {A} selective overview of theories and algorithms,'' \emph{Handbook of
  Reinforcement Learning and Control}, pp. 321--384, 2021.

\bibitem{sunehag2018value}
P.~Sunehag, G.~Lever, A.~Gruslys, W.~M. Czarnecki, V.~F. Zambaldi,
  M.~Jaderberg, M.~Lanctot, N.~Sonnerat, J.~Z. Leibo, K.~Tuyls \emph{et~al.},
  ``Value-decomposition networks for cooperative multi-agent learning based on
  team reward.'' in \emph{Conference on Autonomous Agents and Multiagent
  Systems}, 2018, pp. 2085--2087.

\bibitem{rashid2018qmix}
T.~Rashid, M.~Samvelyan, C.~Schroeder, G.~Farquhar, J.~Foerster, and
  S.~Whiteson, ``{QMIX}: {M}onotonic value function factorisation for deep
  multi-agent reinforcement learning,'' in \emph{International Conference on
  Machine Learning}, 2018, pp. 4295--4304.

\bibitem{devlin2011theoretical}
S.~Devlin and D.~Kudenko, ``Theoretical considerations of potential-based
  reward shaping for multi-agent systems,'' in \emph{Conference on Autonomous
  Agents and Multi-Agent Systems}, 2011, pp. 225--232.

\bibitem{lu2011policy}
X.~Lu, H.~M. Schwartz, and S.~N. Givigi, ``Policy invariance under reward
  transformations for general-sum stochastic games,'' \emph{Journal of
  Artificial Intelligence Research}, vol.~41, pp. 397--406, 2011.

\bibitem{harutyunyan2015expressing}
A.~Harutyunyan, S.~Devlin, P.~Vrancx, and A.~Nowe, ``Expressing arbitrary
  reward functions as potential-based advice,'' in \emph{AAAI Conference on
  Artificial Intelligence}, 2015, pp. 2652--2658.

\bibitem{klissarov2020reward}
M.~Klissarov and D.~Precup, ``Reward propagation using graph convolutional
  networks,'' \emph{Advances in Neural Information Processing Systems},
  vol.~33, 2020.

\bibitem{mesnard2020counterfactual}
T.~Mesnard, T.~Weber, F.~Viola, S.~Thakoor, A.~Saade, A.~Harutyunyan,
  W.~Dabney, T.~Stepleton, N.~Heess, A.~Guez \emph{et~al.}, ``Counterfactual
  credit assignment in model-free reinforcement learning,'' \emph{International
  Conference on Machine Learning}, 2021.

\bibitem{puterman2014markov}
M.~L. Puterman, \emph{Markov Decision Processes: Discrete Stochastic Dynamic
  Programming}.\hskip 1em plus 0.5em minus 0.4em\relax John Wiley \& Sons,
  2014.

\bibitem{grzes2017reward}
M.~Grze{\'s}, ``Reward shaping in episodic reinforcement learning,'' in
  \emph{Autonomous Agents and MultiAgent Systems}, 2017, pp. 565--573.

\bibitem{eck2016potential}
A.~Eck, L.-K. Soh, S.~Devlin, and D.~Kudenko, ``Potential-based reward shaping
  for finite horizon online {POMDP} planning,'' \emph{Autonomous Agents and
  Multi-Agent Systems}, vol.~30, no.~3, 2016.

\bibitem{devlin2012dynamic}
S.~M. Devlin and D.~Kudenko, ``Dynamic potential-based reward shaping,'' in
  \emph{Conference on Autonomous Agents and Multiagent Systems}, 2012, pp.
  433--440.

\bibitem{wiewiora2003principled}
E.~Wiewiora, G.~W. Cottrell, and C.~Elkan, ``Principled methods for advising
  reinforcement learning agents,'' in \emph{International Conference on Machine
  Learning}, 2003, pp. 792--799.

\bibitem{gupta2017cooperative}
J.~K. Gupta, M.~Egorov, and M.~Kochenderfer, ``Cooperative multi-agent control
  using deep reinforcement learning,'' in \emph{Conference on Autonomous Agents
  and Multiagent Systems}, 2017, pp. 66--83.

\bibitem{borkar2009stochastic}
V.~S. Borkar, \emph{Stochastic approximation: {A} dynamical systems
  viewpoint}.\hskip 1em plus 0.5em minus 0.4em\relax Springer, 2009, vol.~48.

\bibitem{kushner2012stochastic}
H.~J. Kushner and D.~S. Clark, \emph{Stochastic approximation methods for
  constrained and unconstrained systems}.\hskip 1em plus 0.5em minus
  0.4em\relax Springer Science \& Business Media, 2012, vol.~26.

\bibitem{williams1991probability}
D.~Williams, \emph{Probability with Martingales}.\hskip 1em plus 0.5em minus
  0.4em\relax Cambridge University Press, 1991.

\bibitem{prasad2015two}
H.~Prasad, L.~Prashanth, and S.~Bhatnagar, ``Two-timescale algorithms for
  learning nash equilibria in general-sum stochastic games,'' in
  \emph{Conference on Autonomous Agents and Multiagent Systems}, 2015, pp.
  1371--1379.

\bibitem{rudin1964principles}
W.~Rudin, \emph{Principles of Mathematical Analysis}.\hskip 1em plus 0.5em
  minus 0.4em\relax McGraw-Hill, New York, 1964.

\bibitem{Yang2018finite}
Z.~Yang, K.~Zhang, M.~Hong, and T.~Başar, ``A finite sample analysis of the
  actor-critic algorithm,'' in \emph{IEEE Conference on Decision and Control
  (CDC)}, 2018, pp. 2759--2764.

\bibitem{jiang2020enhanced}
Y.~Jiang and F.~Tan, ``An enhanced model-free reinforcement learning algorithm
  to solve {N}ash equilibrium for multi-agent cooperative game systems,''
  \emph{IEEE Access}, vol.~8, pp. 223\,743--223\,755, 2020.

\bibitem{yang2018mean}
Y.~Yang, R.~Luo, M.~Li, M.~Zhou, W.~Zhang, and J.~Wang, ``Mean field
  multi-agent reinforcement learning,'' in \emph{International Conference on
  Machine Learning}.\hskip 1em plus 0.5em minus 0.4em\relax PMLR, 2018, pp.
  5571--5580.

\bibitem{marom2018belief}
O.~Marom and B.~Rosman, ``Belief reward shaping in reinforcement learning,'' in
  \emph{AAAI}, 2018, pp. 3762--3769.

\bibitem{brockman2016openai}
G.~Brockman, V.~Cheung, L.~Pettersson, J.~Schneider, J.~Schulman, J.~Tang, and
  W.~Zaremba, ``Openai {G}ym,'' \emph{arXiv:1606.01540}, 2016.

\bibitem{Adam}
D.~P. Kingma and J.~Ba, ``Adam: A method for stochastic optimization,''
  \emph{arXiv:1412.6980}, 2014.

\bibitem{wen2019probabilistic}
Y.~Wen, Y.~Yang, R.~Luo, J.~Wang, and W.~Pan, ``Probabilistic recursive
  reasoning for multi-agent reinforcement learning,'' in \emph{International
  Conference on Learning Representations}, 2019.

\end{thebibliography}

\end{document}